\DeclareSymbolFont{bbold}{U}{bbold}{m}{n}
\DeclareSymbolFontAlphabet{\mathbbold}{bbold}
\newcommand{\vect}[1]{\mathbbold{#1}}
\newcommand{\vectorones}[1][]{\vect{1}_{#1}}
\newcommand{\vectorzeros}[1][]{\vect{0}_{#1}}
\def\diag{\operatorname{diag}}
\def\real{\mathbb{R}}
\def\naturals{\mathbb{N}}
\providecommand{\keywords}[1]{\textbf{\textit{Index terms:}} #1}
\newcommand{\until}[1]{\{1,\dots, #1\}}
\newcommand{\subscr}[2]{#1_{\textup{#2}}}
\newcommand{\setdef}[2]{\{#1 \; | \; #2\}}
\newcommand{\map}[3]{#1: #2 \rightarrow #3}
\newcommand{\lmax}{\subscr{\lambda}{max}}
\newcommand{\vmax}{v_{\max}}
\newcommand{\umax}{u_{\max}}
\newcommand{\ex}{\mathrm{e}} 
\newcommand{\hA}{\hat{A}}
\newcommand{\nohA}{ \tfrac{\beta}{\gamma} A}
\newcommand{\Ooop}{F_+}
\newcommand{\ooop}{f_+}
\newcommand{\ooom}{f_-}
\newcommand{\Ooom}{F_-}
\DeclareSymbolFont{bbold}{U}{bbold}{m}{n}
\DeclareSymbolFontAlphabet{\mathbbold}{bbold}
\newtheorem{theorem}{Theorem}
\newtheorem{lemma}[theorem]{Lemma}
\title{On the Dynamics of Deterministic Epidemic Propagation over Networks\thanks{This material is based upon work supported by, or
  in part by, the U.~S.~Army Research Laboratory and the U.~S.~Army
  Research Office under grant numbers W911NF-15-1-0577. }}
\author{\qquad Wenjun Mei \qquad Shadi Mohagheghi \qquad Sandro Zampieri \qquad Francesco Bullo \thanks{ Wenjun Mei and
    Francesco Bullo are with the Department of Mechanical Engineering and
    with Center for Control, Dynamical Systems, and Computation, University
    of California, Santa Barbara, Santa Barbara, CA 93106, USA,
    \texttt{meiwenjunbd@gmail.com}, \texttt{bullo@engineering.ucsb.edu} Shadi Mohagheghi is with the Department of Electrical and Computer Engineering, University of California at Santa Barbara, Santa Barbara, CA 93106, USA,
    \texttt{shadi.mohagheghie@gmail.com} Sandro Zampieri is with the Department of Information Engineering, University of Padova, Italy, \texttt{zampi@dei.unipd.it} }}
\begin{document}
\maketitle

\begin{abstract}
  In this work we review a class of deterministic nonlinear models for
  the propagation of infectious diseases over contact networks with
  strongly-connected topologies. We consider network models for
  susceptible-infected (SI), susceptible-infected-susceptible (SIS),
  and susceptible-infected-recovered (SIR) settings.  In each setting,
  we provide a comprehensive nonlinear analysis of equilibria,
  stability properties, convergence, monotonicity, positivity, and
  threshold conditions.
  For the network SI setting, specific contributions include
  establishing its equilibria, stability, and positivity properties.
  For the network SIS setting, we review a well-known deterministic
  model, provide novel results on the computation and characterization
  of the endemic state (when the system is above the epidemic
  threshold), and present alternative proofs for some of its
  properties.
  Finally, for the network SIR setting, we propose novel results for
  transient behavior, threshold conditions, stability properties, and
  asymptotic convergence. These results are analogous to those
  well-known for the scalar case. In addition, we provide a novel
  iterative algorithm to compute the asymptotic state of the network
  SIR system.
\end{abstract}

\keywords{network propagation model, nonlinear dynamical system, phase transition, mathematical epidemiology}

\section{Introduction}

\subsection{Motivation and problem description}
Propagation phenomena appear in numerous disciplines. Examples include
the spread of infectious diseases in contact networks, the
transmission of information in communication networks, the diffusion
of innovations in competitive economic networks, cascading failures in
power grids, and the spreading of wild-fires in forests. Scalar models
of propagation phenomena have been widely studied, e.g., see the
survey by Hethcote~\cite{HWH:00} on scalar epidemic spreading models.  These
models qualitatively capture some dynamic features, including phase
transitions and asymptotic states. However, shortcomings of these
scalar models are also prominent: for example, scalar models are
typically based on the assumption that individuals in the population
have the same chances of interacting with each other. This assumption
overlooks the internal structure of the network over which the
propagation occurs, as well as the heterogeneity of individuals in the
network. Both these aspects play critical roles in shaping the
sophisticated dynamical behavior of the propagation processes.

In a general formulation, propagation is a stochastic process on a
complex network. Its features are determined by the properties of
local node-to-node exchanges as well as of the global contact
networks. Stochastic propagation processes can be modeled as Markov
chains of exponential dimensions in the size of the network. Due to
these properties of propagation processes, many relevant research
questions arise naturally. Three key questions are: (1) how to deal
with the lack of well-quantified local details of the complex networks; (2)
how to identify the models and parameters, and estimate the states of
such stochastic large-dimension phenomena, and (3) how to analyze the
transient and asymptotic properties of their dynamics.

Three approaches are commonly adopted in the analysis of propagation
models. The first is to directly characterize the stochastic process
adopting tools from branching processes, percolation, and random graph
theory; e.g., see~\cite{MD-LM:10}. The second approach proposes
a degree-based model based on the approximation that
nodes with the same degree exhibit similar behavior; e.g.,
see~\cite[Chapter~17]{MEJN:10}. The first two approaches are both
based on the random graphs for which global characteristics can be
estimated statistically, such as the degree distribution. The third
approach is based on the mean-field approximation of Markov-chain models
and algebraic graph theory; this approach results in a deterministic
network model. Distinct from the first two approaches, the mean-field
models first assume knowledge of the local propagation parameters and
then relate the dynamical behavior of the propagation process to some
global parameters of the network, which can be estimated without
knowing the full local details, e.g., the spectral radius of the
network's adjacency matrix.  One of the main advantages of the
mean-field approach is that, by representing the network as an
adjacency matrix, well-established theorems in matrix analysis and
dynamical systems can be applied to the analysis of some sophisticated
behavior of the propagation processes.

In this paper we review a class of epidemic propagation models which
adopt the third approach. Distinct in the assumptions on the
microscopic features of the disease and the individual behavior, the
epidemic propagation models we focus on are classified into three
types: the \emph{Susceptible-Infected} (SI) model, the
\emph{Susceptible-Infected-Susceptible} (SIS) model and the
\emph{Susceptible-Infected-Recovered} (SIR) model; basic
representations of these models are illustrated in
Figure~\ref{fig:compartmental}. In this work we review models of these
three types over networks and characterize their dynamical
properties. In short, we study the network SI, network SIS, and
network SIR models.
\begin{figure}[htb]
  \centering
  \includegraphics[width=.5\linewidth]{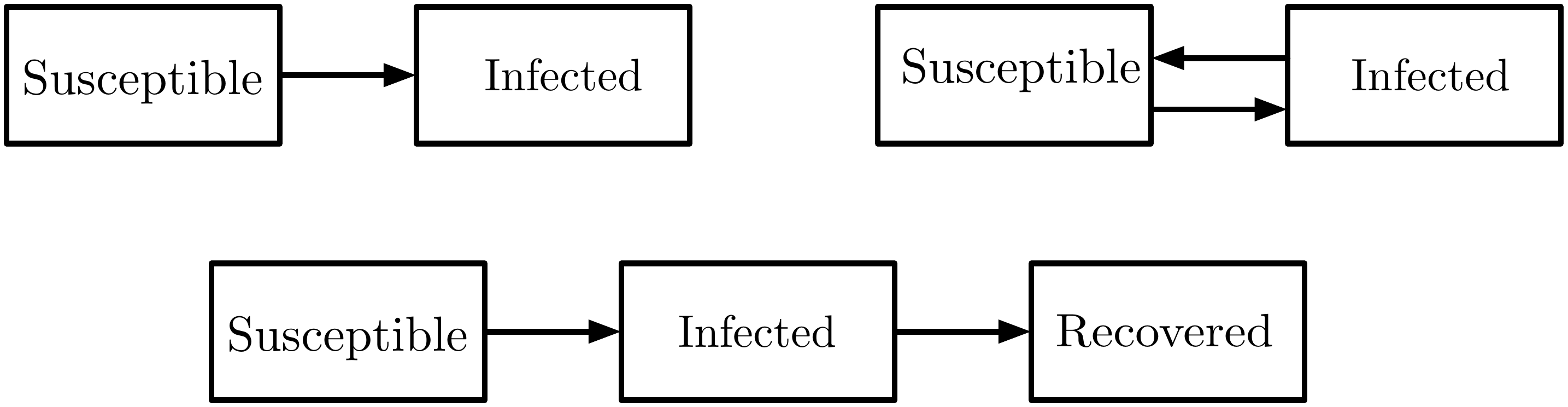}
  \caption{Three basic models of infectious diseases: SI, SIS and SIR.}
  \label{fig:compartmental}
\end{figure}

\subsection{Literature review}

The dynamics of several classic scalar epidemic models, i.e., the
population models without network structure, are surveyed in detail
by Hethcote~\cite{HWH:00}. Among the different metrics discussed, identifying
the \emph{effective reproduction number} $R$ is of particular interest
to researchers; $R$ is the expected number of individuals that a
randomly infected individual can infect during its infection period.
In these scalar models, whether an epidemic outbreak occurs or the
disease dies down depends upon whether $R>1$ or $R<1$, i.e., upon
whether the system is above or below the so-called \emph{epidemic
  threshold}. Here by epidemic outbreak we mean an exponential growth
of the fraction of the infected population for small time. The
\emph{basic reproduction number} $R_0$ is the effective reproduction
number in a fully-healthy susceptible population. In what follows we
focus our review on deterministic network models.

The earliest work on the (continuous-time heterogeneous) SIS model on
networks is~\cite{AL-JAY:76}. This work proposes an $n$-dimensional
model on a contact network and analyzes the system's asymptotic
behavior. This article proposes a rigorous analysis of the threshold
for the epidemic outbreak, which depends on both the disease
parameters and the spectral radius of the contact network.  For the
case when the basic reproduction number is above the epidemic
threshold, this paper establishes the existence and uniqueness of a
nonzero steady-state infection probability, called the endemic
state. In what follows we refer to the model proposed by Lajmanovich et al.
~\cite{AL-JAY:76} as the \emph{network SIS model}; it is also known as the multi-group or
multi-population SIS model. Numerous extensions and variations on
these basic results have appeared over the years.

Allen~\cite{LJSA:94} proposes and analyzes a discrete-time network SIS
model. This work appears to be the first to revisit and formally
reproduce, for the discrete-time case, the earlier results
by Lajmanovich et al.~\cite{AL-JAY:76}; see also the later work by Wang et al.~\cite{YW-DC-CW-CF:03}. This
work confirms the existence of an epidemic threshold, as a function of
the spectral radius of the contact network. Further recent results on
the discrete-time model are obtained by Ahn et al.~\cite{HJA-BH:13} and
by Azizan~Ruhi et al.~\cite{NAR-BH:15}.

Van~Mieghem et al.\cite{PvM-JO-RK:09} argue that the (continuous-time) network SIS model
is in fact the mean-field approximation of the original Markov-chain
SIS model of exponential dimension; this claim is rigorously proven
by Sahneh et al.~\cite{FDS-CS-PvM:13}. Van~Mieghem et al.~\cite{PvM-JO-RK:09} refer to this model as
the intertwined SIS model and write the endemic state as a continued
fraction.

The works by Fall et al.~\cite{AF-AI-GS-JJT:07} and Khanafer et al.~\cite{AK-TB-BG:16} discuss the
continuous-time network SIS model in a more modern
language. Fall et al.~\cite{AF-AI-GS-JJT:07} refer to this model as the $n$-group
SIS model and apply Lyapunov techniques and Metzler matrix theory to
establish existence, uniqueness, and stability of the equilibrium
points below and above the epidemic threshold. Khanafer et al.~\cite{AK-TB-BG:16} use
positive system theory in their analysis and extend the existence,
uniqueness, and stability results to the setting of weakly connected
digraphs.

An early work by Hethcote~\cite{HWH:78} proposes a general multi-group SIR
model with birth, death, immunization, and de-immunization. The
epidemic threshold and the equilibria below/above the threshold are
characterized. For the simplified model without birth/death and
de-immunization, Hethcote~\cite{HWH:78} proves that the system converges
asymptotically to an all-healthy state. Guo et al.~\cite{HG-ML-ZS:08} consider a
generalized network SIR model with vital dynamics, that is, with birth
and death. They characterize the basic reproduction number and,
through a careful Lyapunov analysis, show the existence and global
asymptotic stability of an endemic state above the threshold.
Youssef et al.~\cite{MY-CS:11} study a special case of the network SIR model under
the name of {individual-based SIR model} over undirected
networks. Through a simulation-based analysis, the epidemic threshold
is given as a function of the spectral radius of the network. To the
best of our knowledge, no works have comprehensively characterized the
properties of the network SI model.

We conclude by mentioning other surveys and textbook treatments.
In~\cite{MM-ME:10}, the stability of equilibria for the SEIR model is
reviewed through Lyapunov and graph theory. The additional state $E$
represents the exposed population, i.e., the individuals who are
infected but not infectious. The book chapters~\cite[Chapter~17]{MEJN:10},
\cite[Chapter~21]{DE-JK:10}, and \cite[Chapter~9]{AB-MB-AV:08} review
various heterogeneous epidemic models. The recent survey
by Nowzari et al.~\cite{CN-VMP-GJP:16} presents various epidemic models and addresses
many solved and open problems in the control of epidemic spreading.

\subsection{Statement of Contribution}

The contributions of this work are as follows: in each section, we
start by reviewing the scalar SI, SIS, and SIR models; these are the
models in which variables represent an entire “well-mixed” population
or nodes of an all-to-all unweighted graph. We then focus our
discussion on multi-group network models and provide a tutorial
comprehensive treatment with comprehensive statements and proofs for
the network SI, SIS and SIR models.

We first introduce and analyze the novel network SI model.  We analyze
its asymptotic convergence, positivity of infection probabilities,
initial growth rate, and the stability of equilibria. We show that in
the network SI model, the system does not display a threshold and all
the trajectories converge to the full contagion state.

Next we focus on the network SIS model.  We review some results
in~\cite{AL-JAY:76, AF-AI-GS-JJT:07, AK-TB-BG:16} regarding the
dynamical behavior of the system below and above the threshold, and
present alternative proofs for them. For systems above the epidemic
threshold, we present a novel provably-correct iterative algorithm for
computing the fraction of infected individuals converging to the
endemic state. We present novel Taylor expansions for the endemic
state near the epidemic threshold and in the limit of high infection
rates. Finally, we show that the spread of infection takes place
instantaneously upon infecting at least one node in the network.

Finally, for the network SIR model, we present novel transient
behavior and system properties. We propose new threshold conditions
above which the epidemic grows initially, and below which it
exponentially dies down. We show that, along all system trajectories,
the infected population asymptotically vanishes and the epidemic
asymptotically dies down. The initial rate of growth above the
threshold is given in terms of network characteristics, initial
conditions, and infection parameters. We show that our proposed
weighted average of the infected population, obtained by the entries
of dominant eigenvector of an irreducible quasi-positive matrix,
captures information regarding the distribution of infection in the
system. We also establish positivity of the infection probabilities
and certain monotonicity properties. Moreover, we provide a novel 
iterative algorithm to compute the asymptotic state of the network 
SIR model, with any arbitrary initial condition. For the iterative 
algorithm, the existence and uniqueness of the fixed point, and the 
convergence of the iteration are rigorously proved. Our results are 
analogous to the scalar SIR model properties and are valid for any 
arbitrary network topologies. In comparison with \cite{MY-CS:11}, 
our treatment builds on their numerical results but our result is 
more general in that it does not depend upon specific initial conditions 
and graph topologies, and establishes numerous properties, 
including the novel characterization of epidemic threshold. 

Finally, we remark that our deterministic network models are derived
from the Markov-chain models through a mean-field
approximation. We do not discuss here the Markov-chain model and the
approximation process and refer instead to~\cite{FDS-CS-PvM:13}
and~\cite[Chapter 17]{FB:16}.

\subsection{Organization}
Section~\ref{sec:model-setup} introduces our model set-up and some
preliminary notations. The SI, SIS and SIR models are presented,
respectively, in Sections~\ref{sec:SI-model},~\ref{sec:SIS-model},
and~\ref{sec:SIR-model}. Section 6 is the conclusion.

\section{Model Set-Up and Notations}
\label{sec:model-setup}
For the scalar models, we use the notation $x(t)$ ($s(t)$ and $r(t)$
resp.) for the fraction of infected (susceptible and recovered resp.)
individuals in the population at time $t$. The rest of this section is
about the notations and basic model set-up for the network epidemic
model.

\emph{a) Contact Network:} The epidemics are assumed to propagate over
a weighted digraph $G=(V,E)$, where $V=\{1,\dots,n\}$ and $E$ is the
set of directed links. Nodes of $G$ can be interpreted as either
single individuals in the contact network or as homogeneous
populations of individuals at each location/node in the contact
network. $A=(a_{ij})_{n\times n}$ denotes the adjacency matrix
associated with $G$. For any $i$,~$j\in V$, $a_{ij}$ characterizes the
contact strength from node $j$ to node $i$. For $(i,j)\in E$,
$a_{ij}>0$ and for $(i,j)\notin E$, $a_{ij}=0$. In this paper, $G$ is
assumed to be strongly connected.

\emph{b) Node States and Probabilities:} For different epidemic
propagation models, the set of possible node states are distinct. For
network SI or SIS models, each node can be in either the
``susceptible'' or ``infected'' state, while in the network SIR model,
there is an additional possible node state: ``recovered.'' For a graph
in which the nodes are single individuals, let $s_i(t)$ ($x_i(t)$ and
$r_i(t)$ resp.) be the probability that individual $i$ is in the
susceptible (infected and recovered resp.) state at time
$t$. Alternatively, if the nodes are considered to be the populations,
then $s_i(t)$ ($x_i(t)$ and $r_i(t)$ resp.) is interpreted as the
fraction of susceptible (infected and recovered resp.) individuals in
population $i$. In this paper, without loss of generality, we adopt
the interpretation of nodes as single individuals.

\emph{c) Frequently Used Notations:} 
The symbol $\real$ denotes the set of real numbers, while $\real_{\ge 0}$
denotes the set of non-negative real numbers. The symbol $\phi$ denotes
the empty set. For any two vectors $x,y\in
\real^{n}$, we write
\begin{align*}
  x \ll y, \qquad & \text{if } x_i<y_i \text{ for all $i\in \until{n}$},\\
  x\le y, \qquad & \text{if } x_i\le y_i \text{ for all $i\in \until{n}$}, \text{ and} \\
  x<y, \qquad & \text{if } x\le y \text{ and } x\neq y.
\end{align*}
We adopt the shorthand notations $\vectorones[n] = [1,\dots,1]^{\top}$
and $\vectorzeros[n] = [0,\dots,0]^{\top}$. Given $x =
[x_1,\dots,x_n]^{\top} \in \real^n$, let $\diag(x)$ denote the
diagonal matrix whose diagonal entries are $x_1,\dots,x_n$. For an
irreducible nonnegative matrix $A$, let $\lmax(A)$ denote the dominant
eigenvalue of $A$ that is equal to the spectral radius
$\rho(A)$. Moreover, we let $v_{\max}(A)$ ($u_{\max}(A)$ resp.) denote
the corresponding entry-wise strictly positive left (right resp.)
eigenvector associated with $\lmax(A)$, normalized to satisfy
$\vect{1}_n^\top \vmax(A)=1$ (resp. $\vect{1}_n^\top \umax(A)=1$). The
Perron-Frobenius Theorem for irreducible matrices guarantees that
$\lmax(A)$, $v_{\max}(A)$ and $u_{\max}$ are well defined and
unique. Where not ambiguous, we will drop the $(A)$ argument and, for
example, write
\begin{equation*}
  v_{\max}^{\top}A =\lmax v_{\max}^{\top}\quad \text{and}\quad Au_{\max} = \lmax u_{\max},
\end{equation*}
with $v_{\max}\gg\vectorzeros[n]$ and $\vectorones[n]^\top
v_{\max}=1$; $u_{\max}\gg\vectorzeros[n]$ and $\vectorones[n]^\top
u_{\max}=1$.

\section{Susceptible-Infected Model}
\label{sec:SI-model}
In this section, we first review the classic scalar susceptible-infected (SI) model, and then present and characterize the network SI model. 
\subsection{Scalar SI model}
The scalar SI model assumes that the growth rate of the fraction of
the infected individuals is proportional to the fraction of the
susceptible individuals, multiplied by a so-called \emph{infection rate}
$\beta>0$. The model is given by
\begin{equation}\label{def:lumped-SI}
\dot{x}(t) = \beta s(t) x(t) = \beta \big (1-x(t) \big)x(t),
\end{equation}
and its dynamical behavior is given by the lemma below. 



\begin{lemma}[Dynamical behavior of the SI model]
\label{Dynamical behavior of the SI model} 
  Consider the scalar SI model~\eqref{def:lumped-SI} with $\beta>0$.  The solution from initial
  condition $x(0)=x_0\in[0,1]$ is
  \begin{equation}
    \label{eq:solution-SI-closed-form}
    x(t) = \frac{ x_0 \ex^{\beta t} }{1-x_0 + x_0 \ex^{\beta t}}.
  \end{equation}
  All initial conditions $0<x_0<1$ result in the solution $x(t)$ being monotonically increasing and converging to the unique equilibrium $1$ as
  $t\to\infty$.
\end{lemma}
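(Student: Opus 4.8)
The plan is to treat~\eqref{def:lumped-SI} as a scalar separable (logistic) differential equation and to proceed in two stages: first derive the closed-form expression~\eqref{eq:solution-SI-closed-form}, and then read off the monotonicity and convergence directly from that expression.

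For the closed form, I would separate variables and write $\frac{dx}{x(1-x)} = \beta\, dt$, valid on the open interval where $x\in(0,1)$. Using the partial-fraction identity $\frac{1}{x(1-x)} = \frac{1}{x} + \frac{1}{1-x}$ and integrating both sides gives $\ln\!\big(\tfrac{x}{1-x}\big) = \beta t + C$. Evaluating at $t=0$ fixes $C = \ln\!\big(\tfrac{x_0}{1-x_0}\big)$, and solving the resulting relation $\tfrac{x}{1-x} = \tfrac{x_0}{1-x_0}\,\ex^{\beta t}$ for $x$ yields exactly~\eqref{eq:solution-SI-closed-form}. An equivalent and slightly cleaner route, since the candidate solution is already supplied, is simply to verify by direct differentiation that the right-hand side of~\eqref{eq:solution-SI-closed-form} satisfies both the ODE and the initial condition, and then invoke the standard existence-and-uniqueness theorem (the vector field $x\mapsto\beta(1-x)x$ is smooth, hence locally Lipschitz) to conclude that it is \emph{the} unique solution.

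Next I would establish that the stated trajectory stays in $(0,1)$ and is monotone. For $0<x_0<1$ both $x_0$ and $1-x_0$ are positive, so in~\eqref{eq:solution-SI-closed-form} the numerator $x_0\ex^{\beta t}$ and the denominator $1-x_0+x_0\ex^{\beta t}$ are strictly positive for all $t\ge 0$; moreover the denominator exceeds the numerator by the positive constant $1-x_0$, so $0<x(t)<1$ throughout. Monotonicity is then immediate from the dynamics itself: since $x(t)\in(0,1)$, we have $\dot x(t) = \beta\,(1-x(t))\,x(t) > 0$, so $x$ is strictly increasing. Finally, dividing numerator and denominator of~\eqref{eq:solution-SI-closed-form} by $x_0\ex^{\beta t}$ and letting $t\to\infty$ gives $x(t)\to 1$, the equilibrium claimed in the statement; the boundary cases $x_0=0$ and $x_0=1$ give $x(t)\equiv 0$ and $x(t)\equiv 1$ respectively, recovering the two fixed points.

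I do not anticipate a genuine obstacle here, since this is the textbook logistic equation; the only points requiring a little care are confirming that the denominator in~\eqref{eq:solution-SI-closed-form} never vanishes for $x_0\in[0,1]$ (so that the formula describes a bona fide solution on all of $[0,\infty)$) and, if one takes the verification route, explicitly appealing to uniqueness rather than merely exhibiting a solution.
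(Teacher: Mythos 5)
Your proposal is correct and complete: the separation-of-variables derivation, the positivity of the denominator, the strict monotonicity from $\dot x=\beta(1-x)x>0$ on $(0,1)$, and the limit computation are all sound, and you rightly note that exhibiting the formula must be paired with an appeal to uniqueness for the locally Lipschitz vector field. The paper itself states this lemma without any proof (it is the classical logistic equation, presented as review material), so there is no in-paper argument to compare against; your write-up is the standard textbook treatment and would serve as a valid proof of the statement.
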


Solutions to equation~\eqref{def:lumped-SI} with different initial
conditions are plotted in Figure~\ref{fig:lumped-SI-evolution}. The SI
model~\eqref{def:lumped-SI} results in an evolution akin to a logistic
curve, and is also called the logistic equation for population growth.

\begin{figure}
  \centering
  \includegraphics[width=.45\linewidth]{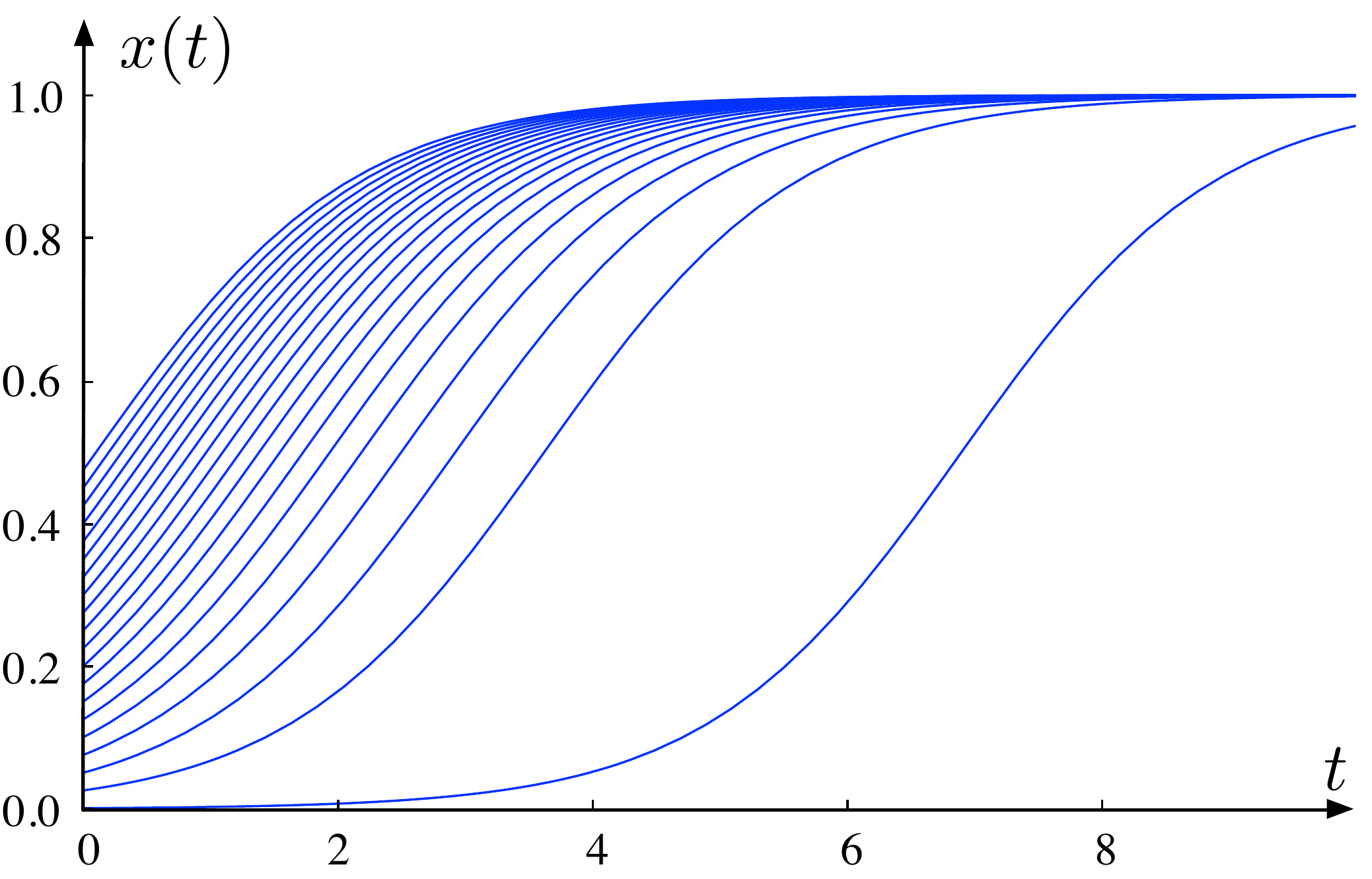}
  \caption{Evolution of the (lumped deterministic) SI model ($\beta=1$)
    from small initial fraction of infected individuals.}
  \label{fig:lumped-SI-evolution}
\end{figure}

\subsection{Network SI model}

The network SI model on a weighted digraph with the adjacency matrix
$A\in \real^{n\times n}_{\ge 0}$ is given by
\begin{equation}\label{def:netSI-entry}
\dot{x}_i(t)=\beta \big( 1-x_i(t) \big)\sum_{j=1}^n a_{ij}x_j(t),
\end{equation} 
or, in equivalent vector form,
\begin{equation}\label{def:network-SI-vector}
  \dot{x}(t)=\beta\Big( I_n-\diag\!\big(x(t)\big) \Big)Ax(t),
\end{equation} 
where $\beta>0$ is the infection rate. Alternatively, in terms of the fractions of susceptibile individuals
$s(t)=\vectorones[n]-x(t)$, the network SI model is
\begin{equation}
  \label{def:network-SI-s}
  \dot{s}(t) = - \beta \diag(s(t)) A (\vectorones[n]-s(t)).
\end{equation}
The following results and their proof are novel.

\begin{theorem}[Dynamical behavior of network SI model]\label{thm:dyn-behav-netSI}
Consider the network SI model~\eqref{def:network-SI-vector} with
$\beta>0$. For strongly connected graph with adjacency matrix $A$, the
following statements hold:
\begin{enumerate}

\item\label{fact:nSI-pos} if $x(0),s(0) \in [0,1]^{n}$, then
  $x(t),s(t) \in [0,1]^{n}$ for all $t > 0$. Moreover, $x(t)$ is
    monotonically non-decreasing (here by monotonically non-decreasing we mean
    $x(t_1)\le x(t_2)$ for all $t_1\le t_2$). Finally, if
  $x(0)>\vectorzeros[n]$, then $x(t)\gg \vectorzeros[n]$ for all $t > 0$;
    
  \item\label{fact:nSI-eq} the model~\eqref{def:network-SI-vector} has two equilibrium points:
    $\vectorzeros[n]$ (no epidemic), and $\vectorones[n]$ (full
    contagion);
    \begin{enumerate}

  \item\label{fact:nSI-lin1} the linearization of
    model~\eqref{def:network-SI-vector} about the equilibrium point
    $\vectorzeros[n]$ is $ \dot{x} = \beta A x$ and it is
    exponentially unstable;

  \item\label{fact:nSI-lin2} let $D=\diag(A\vectorones[n])$ be the
    degree matrix. The linearization of model~\eqref{def:network-SI-s}
    about the equilibrium $\vectorzeros[n]$ is $\dot{s} = -\beta D s$
    and it is exponentially stable;
    \end{enumerate}

  \item\label{fact:nSI-asymp} each trajectory with initial condition
    $x(0)\neq\vectorzeros[n]$ converges asymptotically to
    $\vectorones[n]$, that is, the epidemic spreads monotonically to
    the entire network. 
\end{enumerate}
\end{theorem}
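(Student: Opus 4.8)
The plan is to build everything on the scalar structure of the $s$-dynamics~\eqref{def:network-SI-s}. Written entrywise it reads $\dot s_i(t) = -\beta\, s_i(t)\,(Ax(t))_i$, a scalar linear ODE in $s_i$ driven by the continuous signal $(Ax(t))_i=\sum_j a_{ij}x_j(t)$. Treating $x(\cdot)$ as given and integrating yields the closed form
\begin{equation*}
  s_i(t) = s_i(0)\,\exp\!\Big(-\beta\!\int_0^t (Ax(\tau))_i\,d\tau\Big).
\end{equation*}
Statement~\ref{fact:nSI-pos} is then almost immediate: as long as $x(\tau)\in[0,1]^n$ the integrand is nonnegative, so $0\le s_i(t)\le s_i(0)\le 1$ and hence $x_i(t)=1-s_i(t)\in[x_i(0),1]\subseteq[0,1]$; a standard continuation argument promotes this to forward invariance, so $x(t),s(t)\in[0,1]^n$ for all $t>0$. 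Monotonicity follows because each $s_i$ is non-increasing (equivalently $\dot x_i=\beta(1-x_i)(Ax)_i\ge 0$ on $[0,1]^n$), so $x$ is non-decreasing.

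For the strict-positivity part of~\ref{fact:nSI-pos} I would feed strong connectivity into the same formula. If $x(0)>\vectorzeros[n]$ then $x_{i_0}(0)>0$ for some $i_0$, and monotonicity gives $x_{i_0}(t)\ge x_{i_0}(0)>0$ for all $t\ge 0$. For any $j$ with $a_{ji_0}>0$ we get $\int_0^t(Ax)_j\,d\tau\ge a_{ji_0}x_{i_0}(0)\,t>0$ for every $t>0$, so the exponential is strictly below $1$ and $x_j(t)>0$ on $(0,\infty)$. Once $x_j$ is positive on $(0,t)$, the integral $\int_0^t(Ax)_\ell\,d\tau$ is positive for every $\ell$ with $a_{\ell j}>0$, and the conclusion propagates one step further along the directed edges of $G$. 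Iterating and using that every node is reachable from $i_0$ yields $x(t)\gg\vectorzeros[n]$ for all $t>0$.

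For~\ref{fact:nSI-eq}, any equilibrium $x^\star\in[0,1]^n$ satisfies $(1-x^\star_i)(Ax^\star)_i=0$ for each $i$. If $x^\star_i<1$ for all $i$ then $Ax^\star=\vectorzeros[n]$; pairing with the positive left eigenvector gives $0=\vmax^\top A x^\star=\lmax\,\vmax^\top x^\star$, and since $\lmax>0$ and $\vmax\gg\vectorzeros[n]$ this forces $x^\star=\vectorzeros[n]$. To exclude mixed equilibria, suppose $Q=\{i:x^\star_i<1\}$ and its complement are both nonempty; for each $i\in Q$ nonnegativity forces $x^\star_j=0$ whenever $a_{ij}>0$, so the zero set $Z=\{i:x^\star_i=0\}$ is nonempty, a proper subset, and receives no edges from its complement---contradicting strong connectivity. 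Hence the only equilibria are $\vectorzeros[n]$ and $\vectorones[n]$. For~\ref{fact:nSI-lin1} and~\ref{fact:nSI-lin2} I would simply drop the quadratic terms of $\beta(I_n-\diag(x))Ax$ and of $-\beta\diag(s)A(\vectorones[n]-s)$ at the origin, obtaining $\dot x=\beta A x$ and $\dot s=-\beta D s$ with $D=\diag(A\vectorones[n])$; instability comes from the positive Perron eigenvalue $\beta\lmax$, and stability from $D$ having strictly positive diagonal entries (positive in-degrees, again by strong connectivity).

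Finally, for~\ref{fact:nSI-asymp}, monotonicity together with the bound $x(t)\le\vectorones[n]$ shows that each $x_i(t)$ converges to some limit $x^\star\in[0,1]^n$; the usual argument (a strictly positive $f_i(x^\star)$ would force $x_i$ to grow without bound) identifies $x^\star$ as an equilibrium. Since $x(0)\neq\vectorzeros[n]$ and $x$ is non-decreasing, $x^\star\ge x(0)$ is nonzero, so $x^\star\neq\vectorzeros[n]$ and therefore $x^\star=\vectorones[n]$. The step I expect to require the most care is converting strong connectivity into the two quantitative claims above---the instantaneous propagation of strict positivity along directed paths, and the exclusion of mixed equilibria via the no-incoming-edge set $Z$---whereas the monotone-convergence argument is routine.
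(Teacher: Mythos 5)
Your proof is correct, but in two of the three parts it takes a genuinely different route from the paper. For statement~\ref{fact:nSI-pos} the paper invokes Nagumo's theorem to get invariance of $[0,1]^n$ and then proves strict positivity by contradiction (if $x_i(T)=0$ then monotonicity forces $x_i\equiv 0$ and hence $\dot x_i\equiv 0$ on $[0,T]$, which zeroes out all in-neighbors, and irreducibility collapses everything to $\vectorzeros[n]$); you instead work from the integrating-factor formula $s_i(t)=s_i(0)\exp\bigl(-\beta\int_0^t(Ax)_i\,d\tau\bigr)$ and propagate positivity \emph{forward} along directed paths with explicit quantitative lower bounds. Your version buys a constructive estimate of how fast infection reaches each node, at the price of the one step you should make fully explicit: the ``standard continuation argument.'' Note that the upper bound $x_i\le 1$ is free (the exponential preserves the sign of $s_i(0)$ unconditionally), but the lower bound $x_i\ge 0$ is the part that needs the bootstrap --- on the maximal interval where $x\ge\vectorzeros[n]$ the exponent is nonpositive, and one must rule out immediate exit at the boundary, e.g., by a comparison inequality on $\min_i x_i$ or by falling back on Nagumo as the paper does. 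For statement~\ref{fact:nSI-eq} your closed-set argument on $Z=\{i: x_i^\star=0\}$ is the same iteration as the paper's, with the left-eigenvector pairing as a minor cosmetic variant. For statement~\ref{fact:nSI-asymp} the paper runs LaSalle with $V(x)=\vectorones[n]^\top(\vectorones[n]-x)$ and then uses the level values of $V$ to exclude convergence to $\vectorzeros[n]$; you instead use componentwise monotone convergence (each $x_i$ is non-decreasing and bounded, its limit exists, and $f_i(x^\star)>0$ would contradict boundedness, so the limit is an equilibrium $\ge x(0)>\vectorzeros[n]$, hence $\vectorones[n]$). Your argument is more elementary --- it reuses the monotonicity from part~\ref{fact:nSI-pos} and avoids invariance-principle machinery entirely --- and is a perfectly valid substitute.
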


\begin{proof}

\noindent (i) The fact that, if $x(0), s(0)\in[0, 1]^n$, then $x(t),s(t)\in[0, 1]^n$ for all $t>0$ means that $[0, 1]^n$ is an invariant set for the differential equation~\eqref{def:network-SI-vector}. This is the consequence of Nagumo's Theorem (see Theorem 4.7 in \cite{FB-SM:15}), since for any $x$ belonging on the boundary of the set $[0, 1]^n$, the vector 
$ \beta\Big( I_n-\diag\!\big(x\big) \Big)Ax$ is either tangent, or points inside the set $[0, 1]^n$. 

Observe that the invariance of the set $[0,1]^n$ implies that $\dot{x}(t)\ge \vectorzeros[n]$ and so $x(t_1) \le x(t_2)$ for all $t_1 \le t_2$. 

We want to prove now that,if $x(0)>0_n$, then $x(t)\gg 0_n$ for all $t > 0$. If by contradiction there is $i\in\{1,\ldots,n\}$ and $T>0$ such that $x_i(T)=0$, then the monotonicity of $x_i(t)=0$ would imply that $x_i(t)=0$ for all $t\in[0,T]$, which would yield $\dot{x}_i(t)=0$ for all $t\in[0,T]$. By~\eqref{def:netSI-entry} this would imply that $x_j(t)=0$ for all $t\in[0,T]$ for all $j$ such that $a_{ij}>0$. We could iterate this argument and using the irreducibility of $A$ we would get the contradiction that $x(t)=0$ for all $t\in[0,T]$ concluding in this way the proof of~\ref{fact:nSI-pos}.

\noindent (ii) Regarding statement~\ref{fact:nSI-eq}, note that $\vectorzeros[n]$ and $\vectorones[n]$ are clearly equilibrium points. 
Let $\bar x\in[0,1]^n$ be an equilibrium and assume that $\bar x\not=\vectorones[n]$. Then there is $i$ such that $\bar x_i\not=1$. Since $\beta \big( 1-\bar x_i \big)\sum_{j=1}^n a_{ij}\bar x_j=0$, then $\sum_{j=1}^n a_{ij}\bar x_j=0$ which implies that $\bar x_j=0$ for all $j$ such that $a_{ij}>0$. By iterating this argument and using the irreducibility of $A$ we get that $\bar x=0$ concluding only $\vectorzeros[n]$ and $\vectorones[n]$ are equilibrium points. 
Statements~\ref{fact:nSI-lin1} and~\ref{fact:nSI-lin2} are obvious. Exponential stability of the linearization $\dot s =
- \beta Ds$ is obvious, and the Perron-Frobenius Theorem implies the
existence of the unstable positive eigenvalue $\rho(A)>0$ for the
linearization $\dot x = \beta Ax$.

\noindent (iii) Consider the function
$V(x)=\vectorones[n]^\top (\vectorones[n]-x)$; this is a smooth
function defined over the compact and forward invariant set $[0,1]^n$
(see statement~\ref{fact:nSI-pos}). Since $\dot{V}= - \beta \vectorones[n]^\top
\big(I_{n}-\diag(x)\big) A x$, we know that $\dot{V}\leq0$ for all $x$
and $\dot{V}(x)=0$ if and only if
$x\in\{\vectorzeros[n],\vectorones[n]\}$. The LaSalle Invariance
Principle implies that all trajectories with $x(0)$ converge
asymptotically to either $\vectorones[n]$ or
$\vectorzeros[n]$. Additionally, note that $0\leq V(x)\leq n$ for all
$x\in[0,1]^n$, that $V(x)=0$ if and only if $x=\vectorones[n]$ and
that $V(x)=n$ if and only if $x=\vectorzeros[n]$. Therefore, all
trajectories with $x(0)\neq\vectorzeros[n]$ converge asymptotically to
$\vectorones[n]$.
\end{proof}

For the adjacency matrix $A$, there exists a non-singular matrix $T$
such that $A=TJT^{-1}$, where $J$ is the Jordan normal form of
$A$. Since $A$ is non-negative and irreducible, according to
Perron-Frobenius theorem, the first Jordan block
$J_1=(\lmax)_{1\times 1}$ and $\lmax>Re(\lambda_i)$
for any other eigenvalue $\lambda_i$ of $A$. Consider now the onset of
an epidemic in a large population characterized by a small initial
infection $x(0)=x_0$ much smaller than $\vectorones[n]$.  The system
evolution is approximated by $ \dot{x} = \beta A x$.  This
``initial-times'' linear evolution satisfies
\begin{align*}
  x(t) = \ex^{\beta A t}x(0) = T\ex^{\beta J t}T^{-1} x(0)= \ex^{\beta \lmax t} \big( T
  \vect{e}_1\vect{e}_1^{\top}T^{-1}x(0) + o(1) \big),
\end{align*}
where $\vect{e}_1$ is the first standard basis vector in
$\real^n$ and $o(1)$ denotes a time-varying vector that vanishes
as $t\to +\infty$. Let $u_1$ denote the first column of $T$ and let
$v_1^{\top}$ denote the first row of $T^{-1}$. Since $AT=TJ$ and
$T^{-1}A=JT^{-1}$, one can check that $u_1$ ($v_1$ resp.) is the right
(left resp.) eigenvector of $A$ associated with the eigenvalue
$\lmax$. Since $T^{-1}T=I_n$, we have $v_1^{\top}u_1=1$. therefore,
\begin{align}
  \notag x(t) & = \ex^{\beta \lmax t}\big( u_1v_1^{\top}x(0) + o(1) \big)\\
  \label{eq:initial-times-si-evolution}
  & = \ex^{\beta \lmax t} \Big( \frac{v_{\max}^{\top}x(0)}{v_{\max}^{\top}u_{\max}}u_{\max} + o(1) \Big). 
\end{align} 
That is, the epidemic initially experiences exponential growth with rate $\beta \lmax$ and with distribution among the nodes given by the eigenvector $u_{\max}$.

Now suppose that at some time $T$, for all $i$ we have that $x_i(T) = 1 - \epsilon_i$, where each $\epsilon_i$ is much smaller than $1$. Then, for time
$t>T$, the approximated system for $s(t)$ is given by:
\begin{align*}
  \dot s_i(t) = -\beta d_i s_i(t) \quad \implies \quad s_i(t)= \epsilon_i
  \ex^{- \beta d_i (t-T)}.
  \label{eq:final-times-si-evolution}
\end{align*}
From the discussion above, we conclude that the initial infection rate
is proportional to the eigenvector centrality, and the final infection
speed is proportional to the degree centrality.

\section{Susceptible-Infected-Susceptible model}
\label{sec:SIS-model}
In this section we review the Susceptible-Infected-Susceptible (SIS)
epidemic model.  In addition to the existence of an infection process
with rate $\beta>0$, this model assumes that the infected individuals
recover to the susceptible state at so-called \emph{recovery rate}
$\gamma>0$.

\subsection{Scalar SIS model}
In the scalar SIS model, the population is divided into two fractions:
the infected $x(t)$ and the susceptible $s(t)$, with $x(t)+s(t)=1$, obeying the
following dynamics:
\begin{equation}
  \label{def:SIS-model}
  \dot{x}(t) = \beta s(t) x(t) - \gamma x(t) = (\beta-\gamma - \beta x(t)) x(t).
\end{equation}
The dynamical behavior of system~\eqref{def:SIS-model} is given below. 

\begin{lemma}[Dynamical behavior of the SIS model]
  \label{lemma:lumped-SIS}
  For the SIS model~\eqref{def:SIS-model} with $\beta>0$ and $\gamma>0$:
  \begin{enumerate}
  \item\label{fact:scalarSIS:1} the closed-form solution to
    equation~\eqref{def:SIS-model} from initial condition
    $x(0)=x_0\in[0,1]$, for $\beta\neq\gamma$, is
    \begin{equation}
      \label{eq:solution-SIS-closed-form}
      x(t) = \frac{(\beta-\gamma)x_0 }{\beta x_0- \ex^{-(\beta-\gamma)t}( \gamma-\beta(1-x_0))  };
    \end{equation}

  \item\label{fact:scalarSIS:2} if $\beta\leq\gamma$, all trajectories
    converge to the unique equilibrium $x=0$ (i.e., the epidemic
    disappears);

  \item\label{fact:scalarSIS:3} if $\beta>\gamma$, then each
    trajectory from an initial condition $x(0)>0$ converges to the
     exponentially stable equilibrium $x^*=(\beta-\gamma)/\beta$,
    which is called the \emph{endemic state}.
  \end{enumerate}
\end{lemma}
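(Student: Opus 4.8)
The plan is to treat the three claims in order: derive the closed-form solution first, then read off the asymptotic behavior from it, handling the degenerate case $\beta=\gamma$ (excluded from~\eqref{eq:solution-SIS-closed-form}) by a separate monotonicity argument. Throughout I would write $\alpha=\beta-\gamma$ and recast~\eqref{def:SIS-model} as $\dot{x}=(\alpha-\beta x)x$, a logistic-type equation. A useful preliminary observation is that $[0,1]$ is forward invariant: at $x=0$ the right-hand side vanishes, and at $x=1$ it equals $-\gamma<0$ and hence points inward, so every trajectory starting in $[0,1]$ remains there, legitimizing the analysis below.

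For claim~\ref{fact:scalarSIS:1}, I would solve the equation by the linearizing substitution $y=1/x$ (valid on any trajectory with $x>0$), which turns the logistic equation into the first-order linear ODE $\dot{y}=-\alpha y+\beta$. Solving this with $y(0)=1/x_0$ gives $y(t)=\beta/\alpha+(1/x_0-\beta/\alpha)\ex^{-\alpha t}$, and inverting $x=1/y$ and simplifying yields exactly~\eqref{eq:solution-SIS-closed-form}. Equivalently, one may simply verify by differentiation that the stated formula satisfies~\eqref{def:SIS-model} with $x(0)=x_0$ and invoke uniqueness of solutions to the locally Lipschitz ODE, which sidesteps the integration entirely.

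For claim~\ref{fact:scalarSIS:2} ($\beta\le\gamma$, i.e.\ $\alpha\le 0$), I would give a single phase-line argument covering both $\beta<\gamma$ and $\beta=\gamma$: for any $x\in(0,1]$ we have $\alpha-\beta x<0$, hence $\dot{x}<0$, so along every trajectory $x(t)$ is strictly decreasing and bounded below by $0$. By monotone convergence $x(t)$ tends to some limit $L\ge 0$, which must be an equilibrium; since the roots of $(\alpha-\beta x)x$ are $0$ and $\alpha/\beta\le 0$, the only equilibrium in $[0,1]$ is $x=0$, so $L=0$. For claim~\ref{fact:scalarSIS:3} ($\beta>\gamma$, i.e.\ $\alpha>0$), the equilibria are $0$ and $x^{*}=\alpha/\beta\in(0,1)$; the phase line gives $\dot{x}>0$ on $(0,x^{*})$ and $\dot{x}<0$ on $(x^{*},1]$, so any trajectory with $x_0>0$ converges monotonically to $x^{*}$ (this also follows from~\eqref{eq:solution-SIS-closed-form} by letting $\ex^{-\alpha t}\to 0$). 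Exponential stability then comes from linearization: with $f(x)=(\alpha-\beta x)x$ we have $f'(x^{*})=\alpha-2\beta x^{*}=-(\beta-\gamma)<0$, giving local exponential stability at rate $\beta-\gamma$, consistent with the $\ex^{-(\beta-\gamma)t}$ decay in the closed form.

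The only real subtlety I anticipate is bookkeeping rather than difficulty: ensuring positivity so that the substitution $y=1/x$ is legitimate and the monotone limit lands on the correct equilibrium, and remembering that~\eqref{eq:solution-SIS-closed-form} excludes $\beta=\gamma$, so that boundary case must be argued by the monotonicity route above. The remainder is routine separation-of-variables computation together with a standard one-dimensional stability analysis.
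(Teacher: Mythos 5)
Your proposal is correct. The paper in fact states Lemma~\ref{lemma:lumped-SIS} without proof, treating it as a classical fact about the scalar logistic-type SIS equation, so there is no in-paper argument to compare against; your derivation is the standard one and is complete. The substitution $y=1/x$ correctly linearizes $\dot{x}=(\alpha-\beta x)x$ to $\dot{y}=-\alpha y+\beta$, and inverting does reproduce~\eqref{eq:solution-SIS-closed-form} after the identification $\gamma-\beta(1-x_0)=-\alpha+\beta x_0$; your phase-line treatment of claims~\ref{fact:scalarSIS:2} and~\ref{fact:scalarSIS:3}, the separate handling of $\beta=\gamma$, and the check $f'(x^*)=-(\beta-\gamma)<0$ for exponential stability are all accurate, with the forward invariance of $[0,1]$ and the $x_0=0$ edge case properly accounted for.
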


Case~\ref{fact:scalarSIS:3} corresponds to the case in which epidemic
outbreaks take place and a steady-state epidemic contagion persists.
The basic reproduction number in this deterministic scalar SIS model is
given by $R_0=\beta/\gamma$. Simulations regarding to
Lemma~\ref{lemma:lumped-SIS}\ref{fact:scalarSIS:2}~and~\ref{fact:scalarSIS:3}
are shown in Figure~\ref{fig:lumped-SIS-evolution}.

\begin{figure}
  \centering
  \includegraphics[width=.35\linewidth]{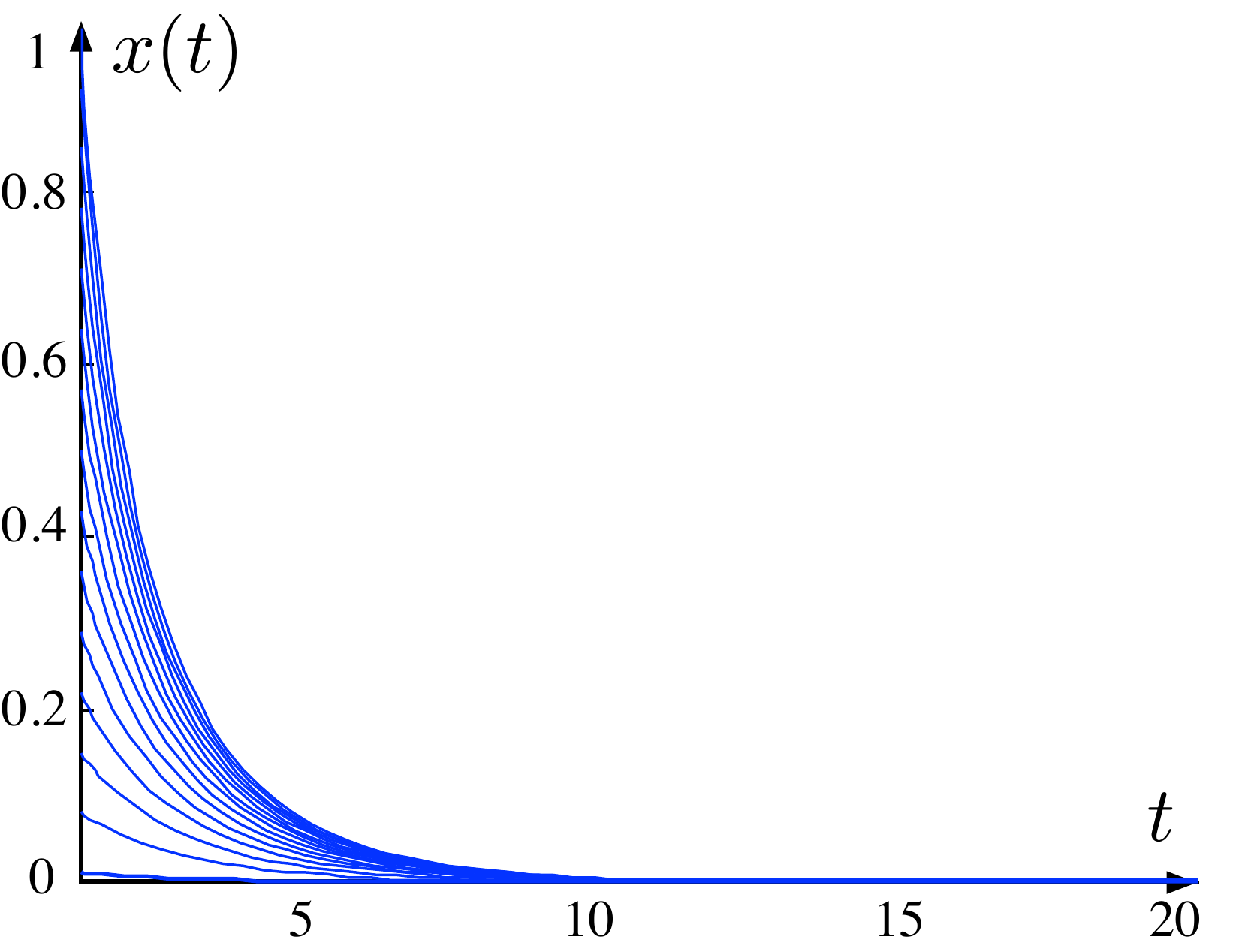}
  \includegraphics[width=.35\linewidth]{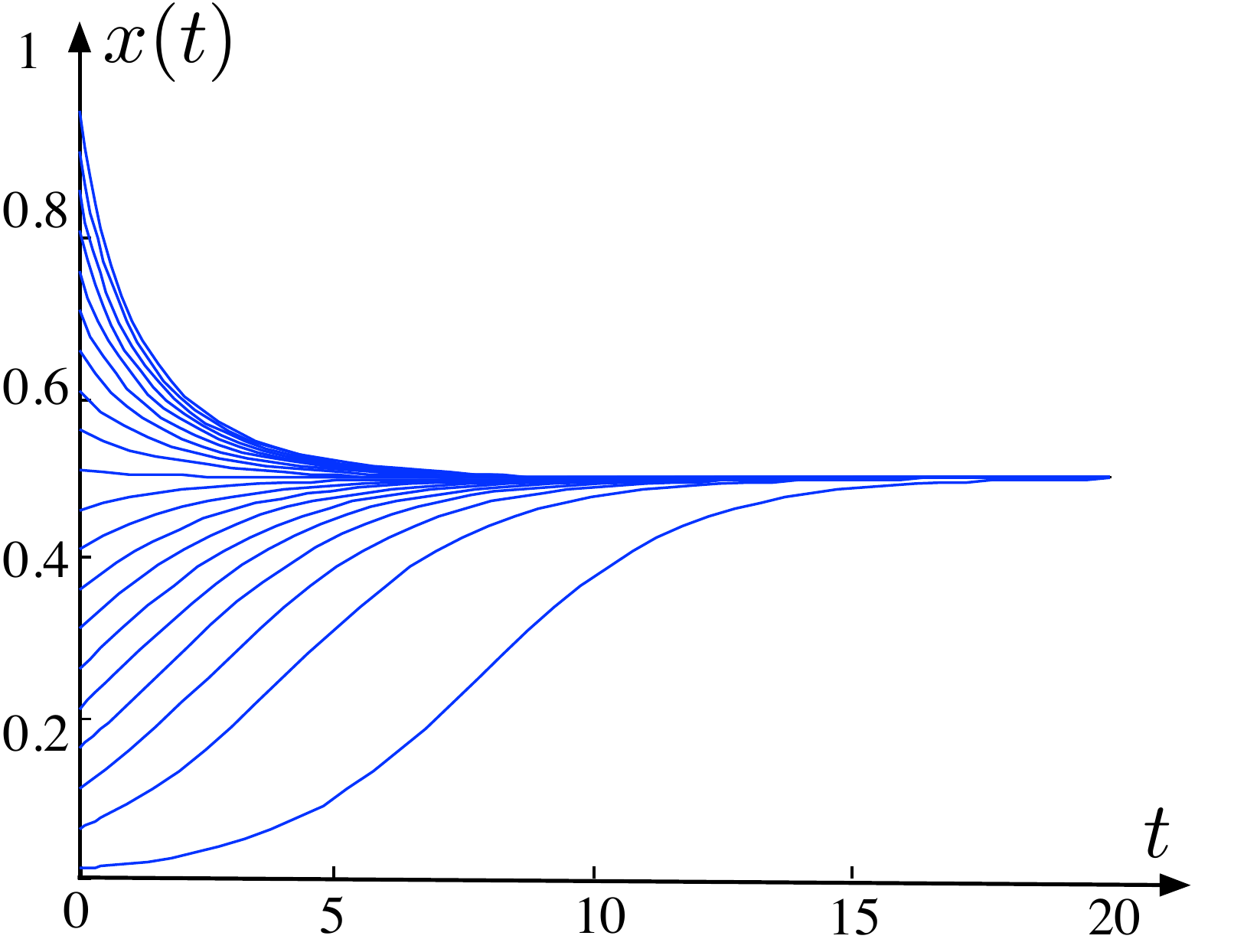}
  \caption{Evolution of the scalar SIS model with varying initial
    fraction of infected individuals. Top figure:
    $\beta=0.5<\gamma=1$. Bottom figure: $\beta=1>\gamma=.5$.}
  \label{fig:lumped-SIS-evolution}
\end{figure}

\subsection{Network SIS Model}
In this section we study the network SIS model which is closely
related to the original ``multi-group SIS model'' proposed
by Lajmanovich~\cite{AL-JAY:76}; see also the intertwined SIS model
in~\cite{PvM-JO-RK:09}.

The network SIS model with infection rate $\beta$ and recovery rate
$\gamma$ is given by:
\begin{equation}
  \label{def:network-SIS}
  \dot{x}_i(t) = \beta (1-x_i(t))\sum_{j=1}^na_{ij} x_j(t) - \gamma x_i(t), 
\end{equation}
or, in equivalent vector form,
\begin{equation}
  \label{def:network-SIS-vector}
  \dot{x}(t) = \beta \big(I_{n}-\diag(x(t))\big) A x(t) - \gamma x(t) .
\end{equation}

In the rest of this section we study the dynamical properties of this
model. We start by defining the monotonically-increasing functions
\begin{equation*}
  \ooop(y)=y/(1+y), \quad\text{and}\quad \ooom(z)=z/(1-z),
\end{equation*}
for $y\in\real_{\geq0}$ and $z\in[0,1[$. Note that $\ooop(\ooom(z))=z$
for all $z\in[0,1)$.  For vector variables $y\in\real_{\geq0}^n$ and
$z\in[0,1)^n$, we write $\Ooop(y)=(\ooop(y_1),\dots,\ooop(y_n))$, and
$\Ooom(z)=(\ooom(z_1),\dots,\ooom(z_n))$.

\paragraph{Behavior of System Below the Threshold}

In this subsection, we characterize the behavior of the network SIS
model in a regime we describe as ``below the threshold.''

\begin{theorem}[Dynamical behavior of the network SIS model: Below the threshold]
  \label{thm:network-SIS-below}
  Consider the network SIS model~\eqref{def:network-SIS}, with
  $\beta>0$ and $\gamma>0$, over a strongly connected digraph with
  adjacency matrix $A$.  Let $\lmax$ and $\vmax$ be the dominant
  eigenvalue of $A$ and the corresponding normalized left eigenvector
  respectively. If $\beta \lmax/\gamma<1$, then
  \begin{enumerate}

   \item\label{fact:belowt-pos} if $x(0),s(0) \in [0,1]^{n}$, then
  $x(t),s(t) \in [0,1]^{n}$ for all $t > 0$. Moreover, if
  $x(0)>\vectorzeros[n]$, then $x(t)\gg \vectorzeros[n]$ for all $t > 0$;

   \item\label{fact:belowt} there exists a unique equilibrium point
     $\vect{0}_n$, the linearization of~\eqref{def:network-SIS} about
     $\vect{0}_n$ is $\dot{x} = (\beta A - \gamma I_n)x$ and it is
     exponentially stable;
    
   \item\label{fact:belowt-globconv} from any $x(0)\neq\vect{0}_n$,
     the weighted average $t\mapsto v_{\max}^\top x(t)$ is
     monotonically and exponentially decreasing, and all the
     trajectories converge to $\vect{0}_n$.
  \end{enumerate}
\end{theorem}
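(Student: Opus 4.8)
The plan is to treat the three statements in sequence, establishing forward invariance of $[0,1]^n$ first and then using the left dominant eigenvector $\vmax$ as the common engine for both the uniqueness of the equilibrium and the global convergence. For statement~\ref{fact:belowt-pos} I would establish invariance of $[0,1]^n$ exactly as in the proof of Theorem~\ref{thm:dyn-behav-netSI}\ref{fact:nSI-pos}: on the face $x_i=0$ the $i$-th component of the vector field equals $\beta\sum_j a_{ij}x_j\ge 0$, and on the face $x_i=1$ it equals $-\gamma<0$, so the field is tangent or inward-pointing and Nagumo's theorem applies, while $s(t)\in[0,1]^n$ follows from $s=\vectorones[n]-x$. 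The strict positivity needs a different device than in the SI case, since $x_i$ is no longer monotone. I would introduce $y_i(t)=\ex^{\gamma t}x_i(t)$, for which $\dot y_i=\ex^{\gamma t}\beta(1-x_i)\sum_j a_{ij}x_j\ge 0$ on $[0,1]^n$, so each $y_i$ is non-decreasing; if $x_i(T)=0$ for some $T>0$, then $y_i(T)=0$ forces $y_i\equiv 0$ on $[0,T]$, hence $x_i\equiv 0$ and $\dot x_i\equiv 0$ there, and the dynamics give $\sum_j a_{ij}x_j\equiv 0$, propagating zeros along out-neighbors until irreducibility yields $x\equiv\vectorzeros[n]$ on $[0,T]$, contradicting $x(0)>\vectorzeros[n]$.

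For the equilibrium analysis in statement~\ref{fact:belowt} the linearization is immediate: writing $f(x)=\beta Ax-\beta\diag(x)Ax-\gamma x$, the quadratic term drops at $x=\vectorzeros[n]$, leaving Jacobian $\beta A-\gamma I_n$, whose eigenvalues $\beta\lambda_i-\gamma$ all have negative real part because $\operatorname{Re}(\lambda_i)\le\lmax$ by Perron-Frobenius together with $\beta\lmax<\gamma$; this gives exponential stability. For uniqueness I would test any equilibrium $\bar x\in[0,1]^n$ against $\vmax^\top$. From $\gamma\bar x_i=\beta(1-\bar x_i)\sum_j a_{ij}\bar x_j\le\beta\sum_j a_{ij}\bar x_j$ one gets $\beta A\bar x\ge\gamma\bar x$; left-multiplying by $\vmax^\top\gg\vectorzeros[n]$ and using $\vmax^\top A=\lmax\vmax^\top$ yields $(\beta\lmax-\gamma)\,\vmax^\top\bar x\ge 0$. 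Since $\beta\lmax-\gamma<0$ this forces $\vmax^\top\bar x\le 0$, while $\vmax\gg\vectorzeros[n]$ and $\bar x\ge\vectorzeros[n]$ force $\vmax^\top\bar x\ge 0$; hence $\vmax^\top\bar x=0$ and therefore $\bar x=\vectorzeros[n]$.

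The same eigenvector idea drives statement~\ref{fact:belowt-globconv}. Setting $W(t)=\vmax^\top x(t)$ and differentiating along~\eqref{def:network-SIS-vector} gives
\[
  \dot W=(\beta\lmax-\gamma)\,\vmax^\top x-\beta\,\vmax^\top\diag(x)Ax .
\]
On the invariant set $[0,1]^n$ the vector $\diag(x)Ax$ is entrywise nonnegative and $\vmax\gg\vectorzeros[n]$, so $\dot W\le(\beta\lmax-\gamma)W$. For $x(0)\neq\vectorzeros[n]$, statement~\ref{fact:belowt-pos} gives $x(t)\gg\vectorzeros[n]$ and thus $W(t)>0$, so $\dot W<0$ and $W$ is strictly decreasing; a Grönwall comparison then yields $W(t)\le W(0)\,\ex^{(\beta\lmax-\gamma)t}\to 0$ exponentially. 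Finally, since $\min_i(\vmax)_i>0$ and $x(t)\ge\vectorzeros[n]$, the bound $W(t)\ge\big(\min_i(\vmax)_i\big)\,\vectorones[n]^\top x(t)$ forces $\vectorones[n]^\top x(t)\to 0$ and hence $x(t)\to\vectorzeros[n]$.

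I expect the main obstacle to be the strict-positivity argument in statement~\ref{fact:belowt-pos}: the lack of monotonicity of $x$ itself must be repaired by the exponential rescaling $y=\ex^{\gamma t}x$ before the irreducibility propagation can be run. Once that is in place, the eigenvector arguments become routine, the crucial observation being that $\diag(x)Ax\ge\vectorzeros[n]$ pins the sign of the quadratic correction on the invariant set, which simultaneously closes the uniqueness proof and produces the exponential Lyapunov estimate for $W=\vmax^\top x$.
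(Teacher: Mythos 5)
Your proposal is correct. Parts~\ref{fact:belowt-pos} and~\ref{fact:belowt-globconv} follow essentially the paper's own route: Nagumo's theorem on the faces of $[0,1]^n$ for invariance, the rescaling $y=\ex^{\gamma t}x$ to restore monotonicity before running the irreducibility propagation (exactly the device the paper uses), and the comparison estimate $\frac{d}{dt}\big(\vmax^\top x\big)\le(\beta\lmax-\gamma)\,\vmax^\top x$ followed by Gr\"onwall. The one place you genuinely diverge is the uniqueness of the equilibrium in part~\ref{fact:belowt}. The paper rewrites the equilibrium condition as a fixed-point equation $x^*=\Ooop\big(\tfrac{\beta}{\gamma}Ax^*\big)$ for the map $\mathcal{F}(x)=\Ooop\big(\tfrac{\beta}{\gamma}Ax\big)$, bounds $\mathcal{F}^k(x)\le\big(\tfrac{\beta}{\gamma}A\big)^k x$ using $\ooop(z)\le z$ together with monotonicity of $\mathcal{F}$, and invokes Schur stability of $\tfrac{\beta}{\gamma}A$ to force every fixed point to $\vectorzeros[n]$. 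You instead test the equilibrium identity directly against the positive left eigenvector: from $\gamma\bar x\le\beta A\bar x$ on $[0,1]^n$ you obtain $(\beta\lmax-\gamma)\,\vmax^\top\bar x\ge 0$, which with $\beta\lmax<\gamma$, $\vmax\gg\vectorzeros[n]$ and $\bar x\ge\vectorzeros[n]$ forces $\vmax^\top\bar x=0$ and hence $\bar x=\vectorzeros[n]$. Your argument is shorter and more elementary --- it needs only Perron--Frobenius and the sign of $\beta\lmax-\gamma$, and it reuses the very eigenvector mechanism that already drives part~\ref{fact:belowt-globconv}. What the paper's fixed-point formulation buys in exchange is machinery that carries over verbatim to the above-threshold regime of Theorem~\ref{thm:network-SIS-above}, where the monotonicity of the same map $\Ooop\big(\tfrac{\beta}{\gamma}A\,\cdot\big)$ yields existence, uniqueness, and an iterative algorithm for the endemic state; your eigenvector test has no analogue there.
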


Historically, it is meaningful to attribute this theorem
to~\cite{AL-JAY:76}, even if the language adopted here is more
modern.

\begin{proof}  
\noindent~\ref{fact:belowt-pos}  As in Theorem \ref{thm:dyn-behav-netSI} the first part is the consequence of Nagumo's Theorem. Then define $y(t):=e^{\gamma t}x(t)$. Notice that this variable satisfies the differential equation $\dot{y}(t)=\beta\diag(s(t))Ay(t)$. From the same arguments used in the proof of the point \ref{fact:nSI-pos} of Theorem \ref{thm:dyn-behav-netSI} we argue that $y(t)\gg \vectorzeros[n]$ for all $t>0$. From this it follows that also $x(t)\gg \vectorzeros[n]$ for all $t>0$.

\noindent~\ref{fact:belowt}
Assume that $x^*$ is an equilibrium point. It is easy to se that $x^*\ll \vectorones[n]$. Observe moreover that $x^*$ is an equilibrium point if and only if
$\hA x^*= \Ooom(x^*)$ or, equivalently, if and only if $\Ooop\big(\nohA x^*\big)= x^*$. 
This means that $x^*$ is an equilibrium if and only if it is a fixed point of
$\mathcal{F}$, where $\mathcal{F}(x):=\Ooop\big(\nohA x\big)$.
Let $\hA=\beta A /\gamma$.  For
$x\in[0,1]^n$, note $\Ooop(\hA x)\leq\hA x$
because $\ooop(z)\leq z$.  Moreover, $\vect{0}_n \leq x \leq y$ implies that
$\vect{0}_n \leq \mathcal{F}(x) \leq \hat{A}y$. Therefore, if
$\vect{0}_n \leq x$, then $\mathcal{F}^k (x) \leq \hat{A}^k x$, for
all $k$. Since $\hat{A}$ is Schur stable, then $\lim_{k
  \to\infty}\mathcal{F}^k (x)=0$. This shows that the only fixed point
of $\mathcal{F}$ is zero.

Next, the linearization of equation~\eqref{def:network-SIS-vector} is
verified by dropping the second-order terms. The linearized system is
exponentially stable at $\vect{0}_n$ for $\beta\lmax-\gamma<0$ because
$\lmax$ is larger, in real part, than any other eigenvalue of $A$ by
the Perron-Frobenius Theorem for irreducible matrices.  

\noindent~\ref{fact:belowt-globconv}
Finally, regarding statement~\ref{fact:belowt-globconv}, define
$y(t)=v_{\max}^\top x(t)$ and note that $\big(I_{n}-\diag(z)\big)
v_{\max} \leq v_{\max}$ for any $z\in[0,1]^n$. Therefore,
\begin{align*}
  \dot{y}(t) 
  &\leq 
  \beta \vmax^\top  A x(t) - \gamma \vmax^\top x(t)= 
  (\beta \lmax -\gamma) y(t)<0.
\end{align*}

By the Gr\"onwall-Bellman Comparison Lemma, $y(t)$ is monotonically
decreasing and satisfies $y(t)\leq
y(0)e^{(\beta\lmax-\gamma)t}$ from all initial conditions
$y(0)$. This concludes our proof of
statement~\ref{fact:belowt-globconv}.
\end{proof}

\paragraph{Behavior of System Above the Threshold}
We present the dynamical behavior of the network SIS model above the
threshold as follows.

\begin{theorem}[Dynamical behavior of the network SIS model: Above the threshold]
  \label{thm:network-SIS-above}
Consider the network SIS model~\eqref{def:network-SIS}, with $\beta>0$
and $\gamma>0$, over a strongly connected digraph with adjacency
matrix $A$.  Let $\lmax$ be the dominant eigenvalue of $A$ and let
$\vmax$ and $\umax$ be the corresponding normalized left and right
eigenvectors respectively. Let $d=A\vectorones[n]$. If $\beta
\lmax/\gamma>1$, then

\begin{enumerate}

\item\label{fact:belowt-pos-2} if $x(0),s(0) \in [0,1]^{n}$, then
  $x(t),s(t) \in [0,1]^{n}$ for all $t > 0$. Moreover, if
  $x(0)>\vectorzeros[n]$, then $x(t)\gg \vectorzeros[n]$ for all $t > 0$;
  
\item\label{fact:0unstable} $\vectorzeros[n]$ is an equilibrium point,
  the linearization of system~\eqref{def:network-SIS-vector} at
  $\vectorzeros[n]$ is unstable due to the unstable eigenvalue
  $\beta\lmax-\gamma$ (i.e., there will be an epidemic outbreak);

  \item\label{fact:xstar} besides the equilibrium $\vectorzeros[n]$,
    there exists a unique equilibrium point $x^*$, called the
    \emph{endemic state}, such that
    \begin{enumerate}        
    \item\label{fact:xstar-positive} $x^*\gg\vectorzeros[n]$,

    \item\label{fact:expansion} $x^*=\delta a \umax+O(\delta^2)$ as 
    $\delta\to 0^+$, where $\delta:=\beta\lmax/\gamma-1$ and
      $$a=\frac{\vmax^T \umax}{\vmax^T\diag(\umax)\umax},$$

    \item\label{fact:expansion-near1} $x^*=\vectorones[n] - (\gamma/\beta)
      \diag (d)^{-1} \vectorones[n]+O(\gamma^2/\beta^2)$, at fixed $A$, as
      $\gamma / \beta \to 0^+$,

    \item \label{fact:sequence} define a sequence $\{y(k)\}_{k\in
      \naturals}\subset\real^n$ by
      \begin{equation}\label{eq:net-SIS-eq-iteration}
        y(k+1) := \Ooop\Bigg(\frac{\beta}{\gamma} A y(k) \Bigg).
      \end{equation}
      If $y(0)\ge 0$ is a scalar multiple of $\umax$ and satisfies either
      $0< \max_i y_i(0)\le 1-\gamma/(\beta \lmax)$ or $\min_i
      y_i(0) \ge 1 - \gamma/(\beta \lmax)$, then
      \begin{equation*}
        \lim_{k \to \infty}y(k)=x^*.
      \end{equation*}
      Moreover, if $\max_i y_i(0)\le 1-\gamma/(\beta \lmax)$,
      then $y(k)$ is monotonically non-decreasing; if $\min_i y_i(0)
      \ge 1 - \gamma/(\beta \lmax)$, then $y(k)$ is
      monotonically non-increasing.
   \end{enumerate}

  \item\label{fact:stability+domain} the endemic state $x^*$ is
    locally exponentially stable and its domain of attraction is
    $[0,1]^n\setminus\vectorzeros[n]$.
    
  \end{enumerate}
\end{theorem}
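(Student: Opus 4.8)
The plan is to reduce the whole theorem to the fixed-point reformulation already used below the threshold, together with the theory of monotone, concave, subhomogeneous maps. Statement~\ref{fact:belowt-pos-2} I would prove exactly as statement~\ref{fact:nSI-pos} of Theorem~\ref{thm:dyn-behav-netSI}: invariance of $[0,1]^n$ is Nagumo's Theorem, and strict positivity follows by passing to $y(t):=e^{\gamma t}x(t)$, which obeys $\dot y=\beta\diag(s)Ay$, and invoking irreducibility of $A$. Statement~\ref{fact:0unstable} is immediate: the linearization at $\vectorzeros[n]$ is $\dot x=(\beta A-\gamma I_n)x$, whose dominant real eigenvalue $\beta\lmax-\gamma$ is positive precisely when $\beta\lmax/\gamma>1$.

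For statement~\ref{fact:xstar}, recall that $x^*$ is an equilibrium iff it is a fixed point of $\mathcal{F}(x):=\Ooop(\nohA x)$. The structural facts I would establish are that $\mathcal{F}$ is order-preserving, concave, and strictly subhomogeneous on the positive cone, the last because $\ooop(\lambda y)<\lambda\ooop(y)$ for $\lambda>1$ and $y>0$, while its derivative at the origin is $\hA$ with $\rho(\hA)=\beta\lmax/\gamma>1$. Positivity~\ref{fact:xstar-positive} of any nonzero fixed point follows from irreducibility as before. Existence I would obtain from the monotone iteration of part~\ref{fact:sequence}, and uniqueness from strict subhomogeneity: if $x^*,\tilde x$ are two positive fixed points, set $t^*=\max\{t:\ t\,x^*\le\tilde x\}$; were $t^*<1$, strict subhomogeneity would give $\tilde x=\mathcal{F}(\tilde x)\ge\mathcal{F}(t^*x^*)\gg t^*\mathcal{F}(x^*)=t^*x^*$, contradicting maximality, so $t^*\ge1$, and by symmetry $x^*=\tilde x$. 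This monotone-map argument is the conceptual heart of the theorem and the step I expect to cost the most care.

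The two expansions are perturbative. For~\ref{fact:expansion} I would write $\hA=(1+\delta)\lmax^{-1}A$, expand $x^*=\Ooop(\hA x^*)$ using $\ooop(y)=y-y^2+O(y^3)$ with the ansatz $x^*=\delta a\umax+O(\delta^2)$; the $O(\delta^2)$ balance is $(I_n-\lmax^{-1}A)z_2=a\umax-a^2\diag(\umax)\umax$, and left-multiplying by $\vmax^\top$ (the left null vector of $I_n-\lmax^{-1}A$) yields the stated $a$ by the Fredholm solvability condition. For~\ref{fact:expansion-near1} I would set $s^*=\vectorones[n]-x^*$ and read from $\beta\diag(s^*)A(\vectorones[n]-s^*)=\gamma(\vectorones[n]-s^*)$ that the leading balance is $\beta\,\diag(s^*)d=\gamma\vectorones[n]+O(\gamma^2/\beta^2)$, giving $s^*=(\gamma/\beta)\diag(d)^{-1}\vectorones[n]+O(\gamma^2/\beta^2)$. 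For the iteration~\ref{fact:sequence}, a direct computation on the ray $y(0)=c\umax$ shows $\mathcal{F}(y(0))\ge y(0)$ iff $\max_i y_i(0)\le1-\gamma/(\beta\lmax)$ and $\mathcal{F}(y(0))\le y(0)$ iff $\min_i y_i(0)\ge1-\gamma/(\beta\lmax)$; hence $y(0)$ is a sub- (resp.\ super-) solution, the iterates are monotone and bounded, and converge to a positive fixed point, which is $x^*$ by uniqueness.

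Finally, for statement~\ref{fact:stability+domain}, the Jacobian of the vector field, $J(x)=\beta(I_n-\diag(x))A-\beta\diag(Ax)-\gamma I_n$, is an irreducible Metzler matrix, so the flow is cooperative on $[0,1]^n$. Local exponential stability I would certify directly: the equilibrium identity $\beta(I_n-\diag(x^*))Ax^*=\gamma x^*$ gives $J(x^*)x^*=-\beta\diag(Ax^*)x^*\ll\vectorzeros[n]$, and since $J(x^*)$ is Metzler and admits the positive vector $x^*$ with $J(x^*)x^*\ll\vectorzeros[n]$, Perron--Frobenius theory yields $\lmax(J(x^*))<0$. For the domain of attraction I would use monotone-flow comparison: any trajectory with $x(0)>\vectorzeros[n]$ satisfies $\vectorzeros[n]\ll x(t)\ll\vectorones[n]$ for $t>0$, so it can be sandwiched between a subsolution $c\umax$ (increasing to $x^*$) and a supersolution $(1-\epsilon)\vectorones[n]$ (decreasing to $x^*$), both of whose trajectories converge to the unique positive equilibrium $x^*$; squeezing forces $x(t)\to x^*$. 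The main obstacles are thus the subhomogeneity-based uniqueness and this global comparison argument, the positivity and linearization facts being routine.
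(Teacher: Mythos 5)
Your proposal is correct and, for the parts the paper actually proves, follows essentially the same route: Nagumo plus the $y(t)=e^{\gamma t}x(t)$ trick and irreducibility for statement~\ref{fact:belowt-pos-2}; the linearization $\dot x=(\beta A-\gamma I_n)x$ for statement~\ref{fact:0unstable}; and the fixed-point reformulation $x^*=\Ooop(\hA x^*)$ with the monotone iteration from a sub-/super-solution on the ray $c\,\umax$ for existence and part~\ref{fact:sequence}. Your uniqueness argument via $t^*=\max\{t:\,t\,x^*\le\tilde x\}$ and strict subhomogeneity $\ooop(\alpha z)>\alpha\,\ooop(z)$ is exactly the paper's computation with $\alpha=\min_j y_j^*/x_j^*$, only packaged in standard monotone-map language. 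The perturbative computations for~\ref{fact:expansion} and~\ref{fact:expansion-near1} coincide with the paper's (your Fredholm solvability condition is the paper's left-multiplication by $\vmax^\top$); the one step you should add is the paper's preliminary justification that $x^*=O(\delta)$ (respectively $\vectorones[n]-x^*=O(\gamma/\beta)$), which it obtains by sandwiching $x^*$ between the limits of the non-decreasing and non-increasing iterations started from $\frac{\delta}{\delta+1}\umax/\max_i u_{\max,i}$ and $\frac{\delta}{\delta+1}\umax/\min_i u_{\max,i}$ --- without some such a priori bound the ansatz is unjustified. The genuine difference is statement~\ref{fact:stability+domain}: the paper simply defers to the cited references, whereas you supply a self-contained and correct argument (the Metzler Jacobian satisfies $J(x^*)x^*=-\beta\diag(Ax^*)x^*\ll\vectorzeros[n]$, hence has negative dominant eigenvalue, and the cooperative flow is squeezed between the increasing trajectory from $c\,\umax$ and the decreasing one from $(1-\epsilon)\vectorones[n]$). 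That addition makes your write-up more complete than the paper's on this point.
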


Note: statement~\ref{fact:0unstable} means that, near the onset of an
epidemic outbreak, the exponential growth rate is $\beta\lmax-\gamma$
and the outbreak tends to align with the dominant eigenvector $\umax$;
for more details see the discussion leading up to the approximate
evolution~\eqref{eq:initial-times-si-evolution}.  The basic
reproduction number for this deterministic network SIS model is given
by $R_0=\beta \lmax/\gamma$.

Historically, the existence of a unique endemic state and its global
attractivity properties are due to~\cite{AL-JAY:76}.  To the best of
our knowledge, the Taylor expansions in parts~\ref{fact:expansion}
and~\ref{fact:expansion-near1} and the algorithm in
part~\ref{fact:sequence} are novel.
The proofs of statements~\ref{fact:xstar} based on the properties of
the map $\Ooop$ are novel.

\begin{proof}[Proof of selected statements in Theorem~\ref{thm:network-SIS-above}]

\noindent~\ref{fact:belowt-pos-2} This point can be proved as done in point \ref{fact:nSI-pos} of Theorem \ref{thm:dyn-behav-netSI}.

\noindent~\ref{fact:0unstable} This follows from the same analysis of the
  linearized system as in the proof of
  Theorem~\ref{thm:network-SIS-below}\ref{fact:belowt}.

\noindent~\ref{fact:xstar} We begin by establishing two
  properties of the map $x\mapsto \Ooop(\hA x)$, for $\hA=\beta A
  /\gamma$. First, we claim that, $y \gg z\geq\vectorzeros[n]$
  implies $\Ooop(\hA y) \gg \Ooop(\hA z)$. Indeed, note that $G$ being
  connected implies that the adjacency matrix $A$ has at least one
  strictly positive entry in each row. Hence, $y-z \gg
  \vectorzeros[n]$ implies $\hA(y-z) \gg \vectorzeros[n]$ and, since
  $\ooop$ is monotonically increasing, $\hA y \gg \hA z$ implies
  $\Ooop(\hA y) \gg \Ooop(\hA z)$.
  
  Second, we observe that, for any $0<\alpha <1$ and $z>0$, we have
  $f_+(\alpha z)\ge z$ if and only if $z\le 1-1/\alpha$. Suppose
  $y(0)$ is a scalar multiple of $\umax$ and $0<\max_i y_i(0)\le
  1-\gamma/(\beta \lmax)$. We have
  \begin{equation*}
    F_+(\hat{A}y(0))_i =
    f_+\Big( \frac{\beta \lmax}{\gamma}y_i(0) \Big) \ge y_i(0).
  \end{equation*}
Therefore, the sequence $\{y(k)\}_{k\in \naturals}$ defined by
equation~\eqref{eq:net-SIS-eq-iteration} satisfies $y(1)\ge y(0)$,
which in turn leads to $y(2)=F_+(\hat{A}y(1))\ge
F_+(\hat{A}y(0))=y(1)$, and by induction, $y(k+1)=F_+(\hat{A}y(k))\ge
y(k)$ for any $k\in \naturals$. Such sequence $\{y(t)\}$ is
monotonically non-decreasing and entry-wise upper bounded by
$\vectorones[n]$. Therefore, as $k$ diverges, $y(k)$ converges to some
$x^*\gg \vectorzeros[n]$ such that $F_+\big(\hat{A}x^*\big)=x^*$. This
proves the existence of an equilibrium $x^*=\lim_{k\to \infty}y(k)\gg
\vectorzeros[n]$ as claimed in statements~\ref{fact:xstar-positive}
and~\ref{fact:sequence}.

Similarly, for any $0<\alpha<1$ and $z>0$, $f_+(\alpha z)\le z$ if and
only if $z\ge 1-1/\alpha$. Following the same line of argument in the
previous paragraph, one can check that the $\{y(k)\}_{k\in
  \naturals}$ defined by equation~\eqref{eq:net-SIS-eq-iteration} is
monotonically non-increasing and converges to some $x^*$ , if $y(0)$
is a scalar multiple of $\umax$ and satisfies $\min_i y_i(0) \ge
1-\gamma/(\beta \lmax)$.

Now we establish the uniqueness of the equilibrium
  $x^*\in[0,1]^n\setminus\{\vectorzeros[n]\}$.  First, we claim that an
  equilibrium point with an entry equal to $0$ must be $\vectorzeros[n]$.
  Indeed, assume $y^*$ is an equilibrium point and assume $y_i^*=0$ for
  some $i\in \{1,\dots,n \}$. The equality $y^*_i=\ooop(\sum_{j=1}^n
  a_{ij}y^*_j)$ implies that also any node $j$ with $a_{ij}>0$ must satisfy
  $y^*_j=0$. Because $G$ is connected, all entries of $y^*$ must be zero.
  Second, by contradiction, we assume there exists another equilibrium
  point $y^*\gg \vectorzeros[n]$ distinct from $x^*$.  
Let $\alpha:=\min_j\{y^*_j/x^*_j\}$ and let $i$ such that $\alpha=y^*_i/x^*_i$. Then $y^* \geq \alpha x^*\gg \vectorzeros[n]$ and $y_i^*=\alpha x_i^*$. Notice that we can assume with no loss of generality that $\alpha<1$ otherwise we exchange $x^*$ and $y^*$. Observe now that
  \begin{align*}
    \big(&\Ooop(\hat{A} y^*)  -  y^*\big)_i  
    = \ooop\big((\hat{A} y^*)_i\big) -  \alpha x^*_i  \\
    &\geq \ooop\big(\alpha(\hat{A} x^*)_i\big) -  \alpha x^*_i  
    \qquad \qquad \qquad \text{($\hA \geq \vectorzeros[n\times n]$)}\\
    & > 
    \alpha \ooop\big((\hat{A} x^*)_i\big) -  \alpha x^*_i  
    \qquad \text{($0<\alpha<1$ and $z>0$)} \\
    &= \alpha \big(\Ooop(\hat{A} x^*) -  x^*\big)_i = 0. 
    \qquad \text{($x^*$ is an equilibrium)}
  \end{align*}
  Therefore, $\big(\Ooop(\hat{A} y^*) - y^*\big)_i > 0$, which contradicts the fact that $y^*$ is an equilibrium.
  
Now we prove~\ref{fact:expansion}. Observe first that, since taking 
$$y(0)= \left(1-\frac{\gamma}{\beta\lmax }\right)
\frac{\umax}{\max_i \{{\umax}_{,i}\}}=
\frac{\delta}{\delta+1}\frac{\umax}{\max_i \{{\umax}_{,i}\}}$$
then $y(k)$ is monotonically non-decreasing and converges to $x^*$, and 
since taking instead
$$y(0)= \left(1-\frac{\gamma}{\beta\lmax }\right)
\frac{\umax}{\min_i \{{\umax}_{,i}\}}=
\frac{\delta}{\delta+1}\frac{\umax}{\min_i \{{\umax}_{,i}\}}$$
then $y(k)$ is monotonically non-increasing and converges to $x^*$, we can argue that
$$\frac{\delta}{\delta+1}\frac{\umax}{\max_i \{{\umax}_{,i}\}}
\le x^*\le
\frac{\delta}{\delta+1}\frac{\umax}{\min_i \{{\umax}_{,i}\}}$$
This implies that $x^*$ is infinitesimal as a function of $\delta$. Consider the expansion $x^*(\delta)=x_1\delta+x_2\delta^2+O(\delta^3)$.
Since the equilibrium $x^*$ satisfies the equation 
$$(\delta+1) \big(I_{n}-\diag(x^*)\big) A x^* - \lmax x^*=0$$
by substituting the expansion and equating to zero the coefficient of the term $\delta$ we obtain the equation
$$Ax_1-\lmax x_1=0$$
which proves that $x_1$ is a multiple of $\umax$, namely $x_1=a\umax$ for some constant $a$. 
By equating to zero the coefficient of the term $\delta^2$ we obtain instead the equation
$$Ax_1+Ax_2-\diag(x_1)Ax_1-\lmax x_2=0$$
Using the fact that $x_1=a \umax$ we argue that
$$a \lmax \umax+Ax_2-a^2\lmax\diag(\umax)\umax-\lmax x_2=0$$
By multiplying on the left by $\vmax^T$ we obtain
$$a \lmax \vmax^T \umax-a^2\lmax\vmax^T\diag(\umax)\umax=0$$
which proves that 
$$a=\frac{\vmax^T \umax}{\vmax^T\diag(\umax)\umax}$$

Point ~\ref{fact:expansion-near1} can be proved in a similar way. Indeed, define $\epsilon:=\gamma/\beta$. Since
$$\left(1-\frac{\epsilon}{\lmax}\right)\frac{\umax}{\max_i \{{\umax}_{,i}\}}
\le x^*\le
\left(1-\frac{\epsilon}{\lmax}\right)\frac{\umax}{\min_i \{{\umax}_{,i}\}}$$
we can argue that the expansion $x^*(\epsilon)=x_0+x_1\epsilon+O(\epsilon^2)$ as $\epsilon$ tends to zero is such that $x_0\gg\vectorzeros[n]$.
Since the equilibrium $x^*$ satisfies the equation 
$$\big(I_{n}-\diag(x^*)\big) A x^* - \epsilon x^*=0$$
by substituting the expansion and equating to zero the coefficient of the term $\epsilon^0$ we obtain the equation
$$Ax_0-\diag(x_0) A x_0=0$$
which proves that $x_0=vectorones[n]$. 
By equating to zero the coefficient of the term $\epsilon^1$ we obtain instead the equation
$$Ax_1-\diag(x_1) A x_0-\diag(x_0) A x_1-x_0=0$$
Using the fact that $x_0=\vectorones[n]$ we argue that
$$\diag(A \vectorones[n]) x_1+\vectorones[n]=0$$
which yieds the thesis.
     
\noindent~\ref{fact:stability+domain} For this point we refer to
   \cite{AL-JAY:76,AF-AI-GS-JJT:07} or~\cite[Theorems~1
     and~2]{AK-TB-BG:16} in the interest of brevity.
\end{proof}

\section{Network Susceptible-Infected-Recovered Model}
\label{sec:SIR-model}
In this section we review the Susceptible-Infected-Susceptible (SIR)
epidemic model.

\subsection{Scalar SIR model}
In this model individuals who recover from infection are assumed not
susceptible to the epidemic any more. In this case, the population is
divided into three distinct groups: $s(t)$, $x(t)$, and $r(t)$,
denoting the fraction of susceptible, infected, and recovered
individuals, respectively, with $s(t)+x(t)+r(t)=1$. We write the
(Susceptible--Infected--Recovered) SIR model as:
\begin{equation} \label{def:SIR-model}
\begin{split}
  \dot{s}(t) &= -\beta s(t) x(t), \\
  \dot{x}(t) &= \beta s(t) x(t) - \gamma x(t), \\
  \dot{r}(t) &= \gamma x(t),
\end{split}
\end{equation}
and present its dynamical behavior in the lemma below.

\begin{lemma}[Dynamical behavior of the SIR model]
  \label{lemma:scalar-SIR-model}
  Consider the SIR model~\eqref{def:SIR-model}.  From each initial
  condition $s(0)+x(0)+r(0)=1$ with $s(0)>0$, $x(0)>0$ and
  $r(0)\geq0$, the resulting trajectory $t\mapsto(s(t),x(t),r(t))$ has
  the following properties:
  \begin{enumerate} 
  
  \item\label{fact:sir-lumped-wellposed-1} $s(t)>0$, $x(t)>0$,
    $r(t)\geq0$, and $s(t)+x(t)+r(t)=1$ for all $t\geq0$;
	
  \item\label{fact:sir-lumped-monotonic-2} $t\mapsto s(t)$ is
    monotonically decreasing and $t\mapsto r(t)$ is monotonically
    increasing;

  \item\label{fact:sir-lumped-uniquesolution-5} $\lim_{t\to\infty} (s(t),x(t),r(t)) =
    (s_\infty,0,r_\infty)$, where $r_\infty$ is the unique solution to
    the equality
    \begin{equation}
      \label{def:SIR-rinfty}
      1-r_\infty = s(0) \ex^{ -\frac{\beta}{\gamma} \, \big(r_\infty -r(0)\big) };
    \end{equation}

  \item\label{fact:sir-lumped-belowthreshold-3} if $\beta
    s(0)/\gamma<1$, then $t\mapsto x(t)$ monotonically and
    exponentially decreases to zero as $t\to\infty$;

  \item\label{fact:sir-lumped-abovethreshold-4} if $\beta
    s(0)/\gamma>1$, then $t\mapsto x(t)$ first monotonically increases
    to a maximum value and then monotonically decreases to $0$ as
    $t\to\infty$; the maximum fraction of infected individuals is
    given by:
    \begin{equation*}
      \label{def:max-infected}
      x_{\max} = x(0) + s(0) - \frac{\gamma}{\beta} \Big(\log(s(0))+1-\log\Big(\frac{\gamma}{\beta}\Big) \Big).
    \end{equation*}
  \end{enumerate}

\end{lemma}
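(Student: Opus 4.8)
The plan is to reduce this three-dimensional system to scalar problems using the conservation law and two time-elimination tricks. For statement~\ref{fact:sir-lumped-wellposed-1}, I would first sum the three equations to get $\dot s + \dot x + \dot r = 0$, so that $s(t)+x(t)+r(t) \equiv s(0)+x(0)+r(0) = 1$ for all $t$. Positivity then follows by reading each equation as a scalar linear ODE with the remaining variables frozen as continuous coefficients: from $\dot s = -\beta s x$ I get $s(t) = s(0)\exp\!\big(-\beta \int_0^t x(\tau)\,d\tau\big) > 0$ since $s(0)>0$; from $\dot x = (\beta s - \gamma)x$ I get $x(t) = x(0)\exp\!\big(\int_0^t (\beta s(\tau)-\gamma)\,d\tau\big) > 0$ since $x(0)>0$; and $\dot r = \gamma x \ge 0$ with $r(0)\ge 0$ gives $r(t)\ge 0$. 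Statement~\ref{fact:sir-lumped-monotonic-2} is then immediate, since $\dot s = -\beta s x < 0$ and $\dot r = \gamma x > 0$ on $(0,\infty)$.

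For statement~\ref{fact:sir-lumped-uniquesolution-5}, monotonicity together with boundedness in $[0,1]$ forces the limits $s_\infty := \lim_{t\to\infty} s(t)$ and $r_\infty := \lim_{t\to\infty} r(t)$ to exist, whence $x_\infty := 1 - s_\infty - r_\infty$ also exists. Because $r$ converges and $\dot r = \gamma x$, the integral $\int_0^\infty x(\tau)\,d\tau = (r_\infty - r(0))/\gamma$ is finite; combined with $x(t)\to x_\infty$ this forces $x_\infty = 0$, since a strictly positive limit would make the integral diverge. To pin down $r_\infty$ I would eliminate time by dividing the equations, $\frac{ds}{dr} = \dot s / \dot r = -(\beta/\gamma)\,s$, which integrates to $s(t) = s(0)\exp\!\big(-(\beta/\gamma)(r(t)-r(0))\big)$; letting $t\to\infty$ and using $s_\infty = 1 - r_\infty$ yields the claimed equation~\eqref{def:SIR-rinfty}. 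Uniqueness I would extract from concavity: the function $h(r) = 1 - r - s(0)\exp\!\big(-(\beta/\gamma)(r-r(0))\big)$ satisfies $h(r(0)) = x(0) > 0$, $h(r)\to -\infty$ as $r\to\infty$, and $h'' < 0$, so it is strictly concave and crosses zero exactly once on $[r(0),\infty)$.

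For statement~\ref{fact:sir-lumped-belowthreshold-3}, monotonicity of $s$ gives $s(t)\le s(0)$, hence $\dot x = (\beta s - \gamma)x \le (\beta s(0)-\gamma)x$ with $\beta s(0)-\gamma < 0$; the Gr\"onwall--Bellman comparison bound $x(t)\le x(0)\exp\!\big((\beta s(0)-\gamma)t\big)$ then gives monotone exponential decay to zero. For statement~\ref{fact:sir-lumped-abovethreshold-4}, the sign of $\dot x$ equals that of $\beta s(t)-\gamma$; since $s$ decreases strictly from $s(0) > \gamma/\beta$ and $x\to 0$ (from part~\ref{fact:sir-lumped-uniquesolution-5}), one shows $s_\infty < \gamma/\beta$, so $s$ crosses the level $\gamma/\beta$ at a unique time $t^*$, with $\dot x > 0$ for $t<t^*$ and $\dot x < 0$ for $t>t^*$, producing the single peak. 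The peak value I would read off from the time-independent first integral obtained from $\frac{dx}{ds} = \frac{\gamma}{\beta s} - 1$, namely that $x + s - (\gamma/\beta)\log s$ is conserved along trajectories; equating its values at $t=0$ and at $t^*$ (where $s(t^*)=\gamma/\beta$) gives the stated expression for $x_{\max}$.

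The step demanding the most care is not any one calculation but the two places where the monotone structure must be combined with the already-established fact $x_\infty = 0$: namely ruling out a second root of $h$ (handled by strict concavity) and guaranteeing that $s$ genuinely reaches the level $\gamma/\beta$ in the supercritical regime (handled by observing that $s_\infty \ge \gamma/\beta$ would keep $\dot x$ positive and contradict $x\to 0$). Everything else is a routine consequence of the conservation law and the two first integrals.
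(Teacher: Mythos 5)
The paper states Lemma~\ref{lemma:scalar-SIR-model} without proof (it is presented as a review of the classical scalar results), so there is no in-paper argument to compare against; judged on its own, your proof is correct and complete. It is the standard Kermack--McKendrick analysis: the conservation law $s+x+r\equiv 1$, the integrating-factor representations of $s$ and $x$ for positivity, finiteness of $\int_0^\infty x$ via $\dot r=\gamma x$ to force $x_\infty=0$, and the two time-eliminated first integrals $ds/dr=-(\beta/\gamma)s$ and $dx/ds=\gamma/(\beta s)-1$ for equation~\eqref{def:SIR-rinfty} and for $x_{\max}$. Your first integral $s(t)\ex^{(\beta/\gamma)(r(t)-r(0))}=s(0)$ is exactly the scalar instance of the quantities $V_i(s,r)=s_i\ex^{(\beta/\gamma)\sum_j a_{ij}r_j}$ that the paper later uses to derive equation~\eqref{eq-net-SIR-iteration-asym-s} for the network SIR model, so your route is the one the authors implicitly have in mind. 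The key non-routine steps --- ruling out a second root of $h$ by strict concavity on $[r(0),\infty)$, and showing $s_\infty<\gamma/\beta$ in the supercritical case by contradiction with $x\to 0$ --- are both handled correctly. Two cosmetic points you could add for completeness: global existence of the trajectory follows from forward invariance of the compact simplex $\{s,x,r\ge 0,\ s+x+r=1\}$, and the uniqueness claim for $r_\infty$ should be read as uniqueness in $[r(0),1]$, which is what your concavity argument delivers.
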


As mentioned before, we describe the behavior in
statement~\ref{fact:sir-lumped-abovethreshold-4} as an epidemic
outbreak, an exponential growth of $t\mapsto x(t)$ for small times.)
The effective reproduction number in the deterministic scalar SIR
model is $R=\beta s(t)/\gamma$.  Note that the basic reproduction
number $R_0=\beta/\gamma$ does not have predict power in this model.

\begin{figure}
  \centering
  \includegraphics[width=.32\linewidth]{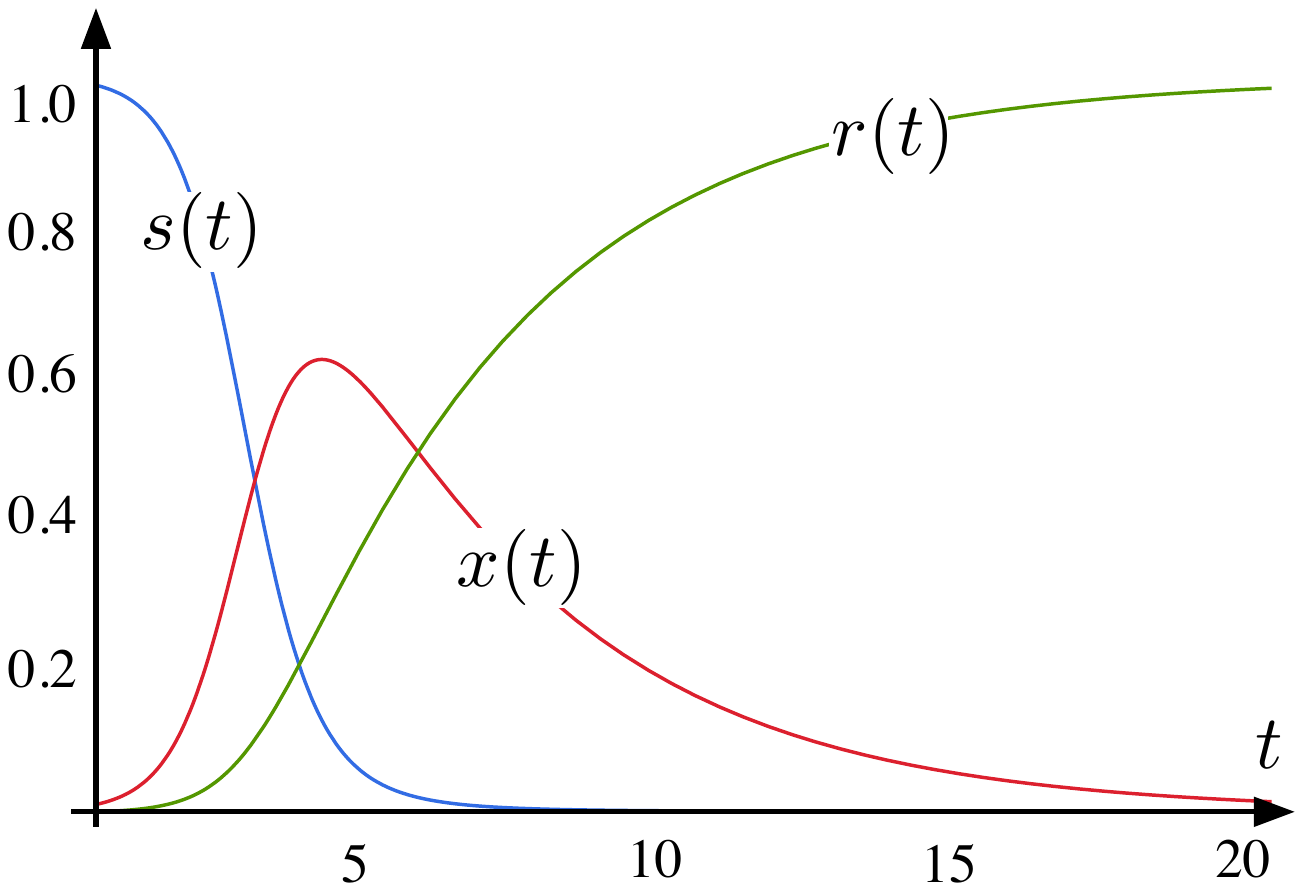} \qquad\qquad
  \hspace{-0.12cm}\includegraphics[width=.35\linewidth]{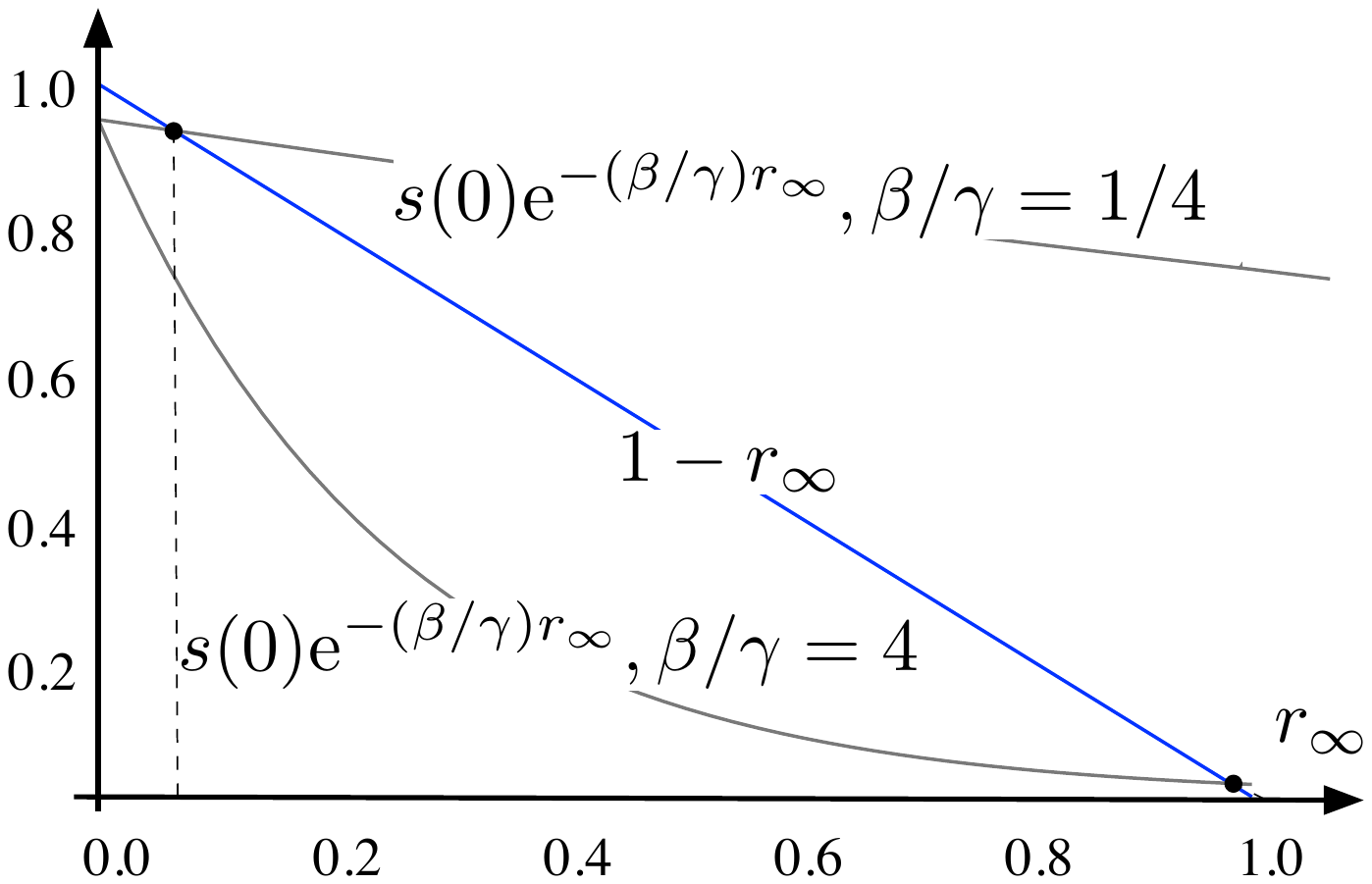}
       \caption{Left figure: evolution of the scalar SIR model from
         small initial fraction of infected individuals (and zero
         recovered); parameters $\beta=2$, $\gamma=1/4$ (case
         \ref{fact:sir-lumped-belowthreshold-3} in
         Lemma~\ref{lemma:scalar-SIR-model}).  Right figure:
         intersection between the two curves in
         equation~\eqref{def:SIR-rinfty} with $s(0)=0.95$, $r(0)=0$
         and $\beta/\gamma\in\{1/4,4\}$. If $\beta/\gamma= 1/4$, then
         $.05<r_\infty<.1$. If $\beta/\gamma=4$, then $.95<r_\infty$.
       }
  \label{fig:lumped-SIR-evolution}
\end{figure}

\subsection{Network SIR model}
The network SIR model on a graph with adjacency matrix $A$ is given by
\begin{align*}
  \dot{s}_i(t) &= -\beta s_i(t)\sum\nolimits_{j=1}^na_{ij} x_j(t),
  \\
  \dot{x}_i(t) &= \beta s_i(t)\sum\nolimits_{j=1}^na_{ij} x_j(t) - \gamma x_i(t),
  \\
  \dot{r}_i(t) &= \gamma x_i(t),
\end{align*}
where $\beta>0$ is the {infection rate} and $\gamma>0$ is the {recovery
  rate}. Note that the third equation is redundant because of the
constraint $s_i(t)+x_i(t)+r_i(t)=1$. Therefore, we regard the
dynamical system in vector form as:
\begin{subequations}\label{def:network-SIR-vector}\begin{align}
    \dot{s}(t) &= -\beta\diag(s(t))Ax(t), \label{def:network-SIR-vector-s} \\
    \dot{x}(t) &= \beta\diag(s(t))Ax(t) - \gamma x(t). \label{def:network-SIR-vector-x}
  \end{align}
\end{subequations}

We state our main novel results of this section below.

\begin{theorem}[Dynamical behavior of the network SIR model]\label{thm:netSIR-asym-behav}
Consider the network SIR model~\eqref{def:network-SIR-vector}, with
$\beta>0$ and $\gamma>0$, over a strongly connected digraph with
adjacency matrix $A$. For $t\geq0$, let $\lmax(t)$ and $\vmax(t)$ be
the dominant eigenvalue of the non-negative matrix $\diag(s(t))A$ and
the corresponding normalized left eigenvector, respectively.  The
following statements hold:
\begin{enumerate}
\item \label{fact:nSIR-0} if $x(0) > \vectorzeros[n]$, and $s(0) \gg \vectorzeros[n]$, then
  \begin{enumerate} 
  \item \label{fact:nSIR-pos} $t \mapsto s(t)$ and $t \mapsto x(t)$ are strictly positive for all $t > 0$,
  \item \label{fact:nSIR-s} $t \mapsto s(t)$ is monotonically decreasing, and
  \item \label{fact:nSIR-lambda} $t \mapsto \lmax(t)$ is monotonically decreasing;
  \end{enumerate}
  
    \item \label{fact:nSIR-equilibrium} the set of equilibrium points
      is the set of pairs $(s^*, \vectorzeros[n])$, for any $s^* \in
      [0, 1]^n$, and the linearization of
      model~\eqref{def:network-SIR-vector} about $(s^*,
      \vectorzeros[n])$ is
    \begin{equation}\label{def:network-SIR-vector-linealized}
    \begin{aligned}
        & \dot{s}(t)=-\beta \diag \big(s^*\big)Ax,\\
        & \dot{x}(t)=\beta \diag \big(s^*\big)Ax-\gamma x;
    \end{aligned}
    \end{equation}
    
   \item \label{fact:nSIR-below}(behavior below the threshold) let
      the time $\tau \geq 0$ satisfy $\beta \lmax(\tau) <
      \gamma$. Then the weighted average $t \mapsto \vmax(\tau)^{\top}
      x(t)$ , for $t \geq \tau$, is monotonically and exponentially
      decreasing to zero;

  \item \label{fact:nSIR-above} (behavior above the threshold) if
    $\beta \lmax(0) > \gamma $ and $x(0)>\vectorzeros[n]$, then,
    \begin{enumerate}
    \item \label{fact:nSIR-inc}(epidemic outbreak) for small time, the
      weighted average $t \mapsto \vmax(0)^{\top} x(t)$ grows
      exponentially fast with rate $\beta\lmax(0)-\gamma$, and
            
    \item \label{fact:nSIR-lambdaT} there exists $\tau>0$ such that
      $\beta \lmax(\tau) < \gamma$;
    \end{enumerate}

    
    \item \label{fact:nSIR-conv} each trajectory converges
      asymptotically to an equilibrium point, that is, $\lim_{t
        \mapsto \infty } x(t)= \vectorzeros[n]$ so that the epidemic
      asymptotically disappears.
\end{enumerate}
\end{theorem}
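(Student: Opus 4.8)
The plan is to treat the five statements in an order dictated by their logical dependencies rather than by their numbering. First I would settle the positivity and monotonicity claims of statement~\ref{fact:nSIR-0}. For~\ref{fact:nSIR-pos}, the variable $s$ is handled by the explicit representation $s_i(t) = s_i(0)\exp\big(-\beta\int_0^t (Ax(\tau))_i\,d\tau\big)$, which stays strictly positive whenever $s_i(0)>0$; for $x$ I would set $y(t):=\ex^{\gamma t}x(t)$, which satisfies the linear time-varying system $\dot y = \beta\diag(s(t))Ay$ with nonnegative (hence Metzler) coefficient matrix, and then invoke the irreducibility argument already used in Theorem~\ref{thm:dyn-behav-netSI}\ref{fact:nSI-pos} to conclude $y(t)\gg\vectorzeros[n]$, whence $x(t)\gg\vectorzeros[n]$, for $t>0$. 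Statement~\ref{fact:nSIR-s} is immediate from $\dot s_i = -\beta s_i (Ax)_i \le 0$, strict once $s,x\gg\vectorzeros[n]$. For~\ref{fact:nSIR-lambda} I would use monotonicity of the Perron root under the entrywise order: $s(t_2)\le s(t_1)$ gives $\diag(s(t_2))A\le \diag(s(t_1))A$ as nonnegative matrices, so Perron--Frobenius yields $\lmax(t_2)\le\lmax(t_1)$. Statement~\ref{fact:nSIR-equilibrium} is a direct computation: imposing $\dot x=\dot s=\vectorzeros[n]$ forces $\gamma x^*=\beta\diag(s^*)Ax^*=\vectorzeros[n]$, so $x^*=\vectorzeros[n]$ and any $s^*$ is admissible, and the linearization follows by discarding the quadratic term $\diag(\delta s)A\,\delta x$.

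For the threshold statements I would work with an eigenvector-weighted scalar. For~\ref{fact:nSIR-below}, set $v:=\vmax(\tau)$ and $w(t):=v^\top x(t)$. Because $s(t)\le s(\tau)$ for $t\ge\tau$, while $v\gg\vectorzeros[n]$, $A$ is nonnegative, and $x\ge\vectorzeros[n]$, one has $v^\top\diag(s(t))Ax \le v^\top\diag(s(\tau))Ax = \lmax(\tau)\,w$, hence $\dot w \le (\beta\lmax(\tau)-\gamma)w<0$; the Gr\"onwall--Bellman lemma then gives monotone exponential decay to zero, exactly as in Theorem~\ref{thm:network-SIS-below}\ref{fact:belowt-globconv}. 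For the outbreak claim~\ref{fact:nSIR-inc}, the same $w$ with $v=\vmax(0)$ satisfies $\dot w(0)=(\beta\lmax(0)-\gamma)w(0)>0$, and the logarithmic derivative $\dot w/w\to\beta\lmax(0)-\gamma$ as $t\to0^+$ identifies the initial exponential growth rate, in analogy with the initial-times analysis~\eqref{eq:initial-times-si-evolution}.

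The core of the argument, and the step I expect to be the main obstacle, is~\ref{fact:nSIR-lambdaT}: that the infection must eventually fall below threshold. Here I would first record the a priori bound $\int_0^\infty x_i(\tau)\,d\tau = (r_i(\infty)-r_i(0))/\gamma \le 1/\gamma$, obtained from $\dot r_i=\gamma x_i$ and $r_i\le 1$. By~\ref{fact:nSIR-s} and~\ref{fact:nSIR-lambda}, $s(t)$ decreases to a limit $s_\infty$ and $\lmax(t)$ decreases to $\lmax(\infty)=\rho(\diag(s_\infty)A)$. Arguing by contradiction, suppose $\beta\lmax(\infty)\ge\gamma$. Let $v_\infty\ge\vectorzeros[n]$, $v_\infty\ne\vectorzeros[n]$, be a nonnegative left Perron eigenvector of $\diag(s_\infty)A$ and set $w(t)=v_\infty^\top x(t)$. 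Since $s(t)\ge s_\infty$, the estimate $v_\infty^\top\diag(s(t))Ax\ge v_\infty^\top\diag(s_\infty)Ax=\lmax(\infty)w$ gives $\dot w \ge (\beta\lmax(\infty)-\gamma)w\ge 0$; fixing any $t_0>0$ with $x(t_0)\gg\vectorzeros[n]$ (so $w(t_0)>0$), we obtain $w(t)\ge w(t_0)>0$ for all $t\ge t_0$, whence $\int_{t_0}^\infty w\,d\tau=+\infty$, contradicting $\int_{t_0}^\infty w\,d\tau = v_\infty^\top\int_{t_0}^\infty x\,d\tau<\infty$. Therefore $\beta\lmax(\infty)<\gamma$, and by continuity of $t\mapsto\lmax(t)$ there is a finite $\tau>0$ with $\beta\lmax(\tau)<\gamma$. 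The delicate points are allowing $s_\infty$ to have vanishing entries (so $\diag(s_\infty)A$ may be reducible and $v_\infty$ only nonnegative, which is exactly why I evaluate at a time $t_0>0$ where $x\gg\vectorzeros[n]$) and the continuity of the Perron root in the matrix entries.

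Finally, statement~\ref{fact:nSIR-conv} follows by chaining the earlier parts. Take $\tau$ from~\ref{fact:nSIR-lambdaT}; at this finite time $s(\tau)\gg\vectorzeros[n]$, so $\diag(s(\tau))A$ is irreducible and $\vmax(\tau)\gg\vectorzeros[n]$, and~\ref{fact:nSIR-below} gives $\vmax(\tau)^\top x(t)\to 0$. Since every entry of $\vmax(\tau)$ is strictly positive and $x(t)\ge\vectorzeros[n]$, this forces $x(t)\to\vectorzeros[n]$; combined with $s(t)\to s_\infty$, the trajectory converges to the equilibrium $(s_\infty,\vectorzeros[n])$ identified in~\ref{fact:nSIR-equilibrium}, so the epidemic asymptotically disappears.
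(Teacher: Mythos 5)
Your proposal is correct, and for parts \ref{fact:nSIR-pos}--\ref{fact:nSIR-inc} it follows essentially the same route as the paper (Nagumo/irreducibility for positivity, the sign of $\dot s_i$, monotonicity of the Perron root under the entrywise order, the weighted average $\vmax^\top x$ with the Gr\"onwall--Bellman comparison). The one substantive divergence is the key step \ref{fact:nSIR-lambdaT}. The paper argues by contradiction at the limit equilibrium $(s^*,\vectorzeros[n])$: it first deduces from $\dot s\to\vectorzeros[n]$ that $x(t)\to\vectorzeros[n]$, then rules out $\beta\lambda^*-\gamma>0$ by exponential instability of the linearization and rules out $\beta\lambda^*-\gamma=0$ by showing ${v^*}^\top\delta_x$ is nondecreasing. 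Your argument instead exploits the conserved quantity $\dot r_i=\gamma x_i$ with $r_i\le1$ to get $\int_0^\infty x_i\,d\tau\le1/\gamma$, and derives the contradiction directly from $\dot w\ge(\beta\lmax(\infty)-\gamma)w\ge0$ with $w(t_0)=v_\infty^\top x(t_0)>0$ versus integrability of $w$. This is more elementary and, frankly, more airtight: it avoids the delicate inference ``linearization unstable $\Rightarrow$ no trajectory converges to that equilibrium'' (which needs more care than the paper gives it), it handles the cases $\beta\lambda^*>\gamma$ and $\beta\lambda^*=\gamma$ uniformly, and it explicitly addresses the possible reducibility of $\diag(s_\infty)A$ by using only a nonnegative nonzero left eigenvector evaluated at a time $t_0>0$ where $x\gg\vectorzeros[n]$. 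Your chaining for \ref{fact:nSIR-conv} (pick $\tau$ from \ref{fact:nSIR-lambdaT}, note $\vmax(\tau)\gg\vectorzeros[n]$ by irreducibility of $\diag(s(\tau))A$, and squeeze $x(t)\to\vectorzeros[n]$ from \ref{fact:nSIR-below}) is cleaner than the paper's, which folds convergence into the argument for \ref{fact:nSIR-lambdaT}. One small shortfall: for \ref{fact:nSIR-lambda} your entrywise comparison only yields $\lmax(t_2)\le\lmax(t_1)$; to get the strict decrease the paper intends, insert the scaling factor $\alpha=\max_i s_i(t_2)/s_i(t_1)<1$ (available since $s\gg\vectorzeros[n]$ is strictly decreasing) so that $\diag(s(t_2))A\le\alpha\,\diag(s(t_1))A$ and hence $\lmax(t_2)\le\alpha\,\lmax(t_1)<\lmax(t_1)$.
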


The effective reproduction number in the deterministic network SIR
model is $R(t)=\beta \lmax(t)/\gamma$.  When $R(0)>1$, we have an
epidemic outbreak, i.e., an exponential growth of infected individual
for short time. In any case, the theorem guarantees that, after at
most finite time, $R(t)<1$ and the infected population decreases
exponentially fast to zero.

\begin{proof}  
Regarding statement~\ref{fact:nSIR-pos}, $s(t)\gg \vectorzeros[n]$ is
due to the fact that $Ax$ is bounded and $s(t)$ is continuously
differentiable to $t$. The statement that $x(t)\gg \vectorzeros[n]$
for all $t>0$ is proved in the same way as
Theorem~\ref{thm:network-SIS-below}~\ref{fact:belowt-pos}. Statement~\ref{fact:nSIR-s}
is the immediate consequence of $\dot{s}_i(t)$ being strictly
negative. From statement~\ref{fact:nSIR-pos} we know that each
$s_{i}(t)$ is positive, and from $A$ being irreducible and $x(0)\neq
\vectorzeros[n]$ we know that $\sum_{j=1}^{n}a_{ij} x_{j}$ is
positive. Therefore, $\dot{s}_i(t) =-\beta s_i(t)\sum_{j=1}^{n}a_{ij}
x_j(t) < 0$ for all $i \in V$ and $t \geq 0$.

For statement~\ref{fact:nSIR-lambda}, we start by recalling the
following property from~\cite[Example~7.10.2]{CDM:01}: for $B$ and
$C$ nonnegative square matrices, if $B\leq C$, then
$\rho(B)\leq\rho(C)$.  Now, pick two time instances $t_1$ and $t_2$ with
$0<t_1<t_2$. Let $\alpha = \max_i s_i(t_2)/s_i(t_1)$ and note
$0<\alpha<1$ because $s(t)$ is strictly positive and monotonically
decreasing. Now note that,
\begin{equation*}
  \diag(s(t_1))A > \alpha  \diag(s(t_1))A \geq \diag(s(t_2))A,
\end{equation*}
so that, using the property above, we know
\begin{equation*}
  \rho(\diag(s(t_1))A) > \alpha  \rho(\diag(s(t_1))A) \geq \rho(\diag(s(t_2))A).
\end{equation*}
This concludes the proof of statement~\ref{fact:nSIR-lambda}.

Regarding statement~\ref{fact:nSIR-equilibrium}, note that a point
$(s^*, x^*)$ is an equilibrium if and only if:
\begin{align*}
  \vectorzeros[n] & =-\beta \diag \big(s^*\big)Ax^*,\quad \text{and}\\
  \vectorzeros[n] & =\beta \diag \big(s^*\big)Ax^*-\gamma x^*.
\end{align*}
Therefore, each point of the form $(s^*, \vectorzeros[n])$ is an
equilibrium. On the other hand, summing the last two equalities we
obtain $\vectorzeros[n]= \gamma x^*$ and thus $x^*$ must be
$\vectorzeros[n]$. As a straightforward result, the linearization of
model~\eqref{def:network-SIR-vector} about any equilibrium point
$(s^*, \vectorzeros[n], \vectorones[n]-s^*)$ is given by
equation~\eqref{def:network-SIR-vector-linealized}.
    
Regarding statement~\ref{fact:nSIR-below}, multiplying
$\vmax(\tau)^{\top}$ from the left on both sides of
equation~\eqref{def:network-SIR-vector-x} we obtain:
\begin{equation*}
  \begin{split}
    \frac{d}{dt}\big(\vmax(\tau)^{\top} x(t) \big) & = \vmax(\tau)^{\top}\Big( \beta \diag\big( s(t) \big)Ax(t) -\gamma x(t) \Big)\\
    & \leq \vmax(\tau)^{\top}\Big( \beta \diag\big( s(\tau) \big)Ax(t) -\gamma x(t) \Big)=(\beta \lmax(\tau) -\gamma) \vmax(\tau)^{\top} x(t) .
  \end{split}
\end{equation*}
Therefore, we obtain
\begin{equation*}
  \vmax(\tau)^{\top} x(t) \le (\vmax(\tau)^\top x(0))
  \ex^{(\beta\lmax(\tau)-\gamma) t}.
\end{equation*}
The right-hand side exponentially decays to zero when $\beta
\lmax(\tau) < \gamma$. Therefore, $\vmax(\tau)^{\top} x(t)$ also
decreases monotonically and exponentially to zero for all $t > \tau$.
    
Regarding statement~\ref{fact:nSIR-inc}, note that based on the
argument in~\ref{fact:nSIR-pos}, we only need to consider the case
when $x(0) \gg \vectorzeros[n]$.  Left-multiplying $\vmax(0)^{\top}$
on both sides of equation~\eqref{def:network-SIR-vector-x}, we obtain:
\begin{equation*}
     \frac{d}{dt}\big( \vmax(0)^{\top}  x(t) \big) \Big \rvert_{t=0} = \vmax(0)^{\top}\Big( \beta \diag\big( s(t) \big)Ax(t) -\gamma x(t) \Big) \Big \rvert_{t=0} =(\beta \lmax(0) -\gamma) \vmax(0)^{\top} x(0).
\end{equation*}
Since $\beta \lmax(0) -\gamma>0$, the initial time derivative
of $ \vmax(0)^{\top} x(t)$ is positive.  Since $t\mapsto
\vmax(0)^{\top} x(t)$ is a continuously differentiable function,
there exists $\tau'>0$ such that $\frac{d}{dt}\big( \vmax(0)^{\top}
x(t)\big)>0$ for any $t\in [0,\tau']$.

Regarding statement~\ref{fact:nSIR-lambdaT}, since $\dot{s}(t)\le
\vectorzeros[n]$ and is lower bounded by $\vectorzeros[n]$, we
conclude that the limit $\lim\limits_{t\to+\infty}s(t)$ exists. Moreover,
since $s(t)$ is monotonically non-increasing, we have $\lim\limits_{t\to
  +\infty}\dot{s}(t)=0$, which implies either $\lim\limits_{t \to
  +\infty}s(t)=\vectorzeros[n]$ or $\lim\limits_{t\to
  +\infty}x(t)=\vectorzeros[n]$. If $s(t)$ converges to
$\vectorzeros[n]$, then $\dot{x}(t)$ converges to $-\gamma
x(t)$. Therefore, there exists $T>0$ such that $\beta
\lmax(T)<\gamma$, which leads to $x(t)\to \vectorzeros[n]$ as
$t\to +\infty$; If $s(t)$ converges to some $s^*>\vectorzeros[n]$,
then $x(t)$ still converges to $\vectorzeros[n]$. Therefore, for any
$\big( s(0),x(0) \big)$, the trajectory $\big( s(t),x(t) \big)$
converges to some equilibria with the form $(s^*,\vectorzeros[n])$,
where $s^*\ge \vectorzeros[n]$. Let
\begin{equation*}
  s(t)=s^* + \delta_s(t),\quad \text{and }x(t)=\vectorzeros[n]+\delta_x(t).
\end{equation*}
We know that $\delta_s(t)\ge 0$ and $\delta_x(t)\ge 0$ for all $t\ge
0$. Moreover, $\delta_s(t)$ is monotonically non-increasing and
converges to $\vectorzeros[n]$, and there exists $\tilde{T}>0$ such
that, for any $t\ge T$, $\delta_x(t)$ is monotonically non-increasing
and converges to $\vectorzeros[n]$.

Let $\lambda^*$ and $v^*$ denote the dominant eigenvalue and the
corresponding normalized left eigenvector of matrix $\diag(s^*)A$,
respectively, that is, ${v^*}^{\top}\diag(s^*)A=\lambda^*
{v^*}^{\top}$. First let us suppose $\beta \lambda^* -\gamma >0$, then
the linearized system of~\eqref{def:SIR-model} around
$(s^*,\vectorzeros[n])$ is written as
\begin{align*}
\dot{\delta}_s & = -\beta \diag(s^*)A\delta_x,\\
\dot{\delta}_x & = \beta\diag(s^*)A\delta_x - \gamma \delta_x.
\end{align*}
Since $\beta \lambda^* -\gamma >0$, the linearized system is
exponentially unstable, which contradicts the fact that
$\big(\delta_s(t),\delta_x(t)\big)\to
(\vectorzeros[n],\vectorzeros[n])$ as $t\to +\infty$. Alternatively,
suppose $\beta \lambda^* -\gamma=0$. By left multiplying
${v^*}^{\top}$ on both sides of the equation for $\dot{x}(t)$
in~\eqref{def:SIR-model}, we obtain
\begin{align*}
{v^*}^{\top}\dot{\delta}_x = (\beta \lambda^*-\gamma)({v^*}^{\top}\delta_x) + \beta {v^*}^{\top}\diag(\delta_s)A\delta_x= \beta {v^*}^{\top}\diag(\delta_s)A\delta_x \ge \vectorzeros[n],
\end{align*}
which contradicts $\delta_x(t)\to \vectorzeros[n]$ as $t\to
+\infty$. Therefore, we conclude that $\beta \lambda^*
-\gamma<0$. Since $\lmax(t)$ is continuous on $t$, we conclude
that there exists $\tau<+\infty$ such that $\beta
\lmax(t)-\gamma<0$.

\end{proof}

In what follows, we present an iterative algorithm that computes the
limit state $\lim_{t\to\infty}\big(s(t),0,r(t)\big)$ of the network
SIR model~\eqref{def:network-SIR-vector} as a function of an arbitrary
initial condition $\big(s(0),x(0),r(0)\big)$. To our best knowledge,
this problem and its solution are novel.

Note that, for the scalar SIR model~\eqref{def:SIR-model}, if we
define
\begin{equation*}
   V\big( s(t), x(t) \big) := s(t) \ex^{ \frac{\beta}{\gamma} \big(1-x(t) -s(t)\big)}.
\end{equation*}
Simple calculations result in $dV\big( s(t),x(t) \big)/dt = 0$, which
implies that the trajectories are on the level sets of $V$ and in the
set $\setdef{(s,x)\in\real^2}{s\geq0, x\geq0, s+x\leq1}$. Here, we
apply a similar approach to the network SIR
system~\eqref{def:network-SIR-vector}. Let
\begin{equation*}
  V_i(s,r):= s_i \ex^{\frac{\beta}{\gamma} \sum_{j=1}^n a_{ij}r_j},\quad \text{for any }i\in \{1,\dots, n\}.
\end{equation*}
One can check that, along any trajectory of
dynamics~\eqref{def:network-SIR-vector}, $dV_i/dt=0$ for any
$i\in\until{n}$. Therefore, the trajectories $(s(t),r(t))$ lie on the
level curves of the functions $V_i(s,r)$ for $i\in\until{n}$.

Let $s(\infty):=\lim_{t\to +\infty}s(t)$, $x(\infty):=\lim_{t\to
  +\infty}x(t)$, and $r(\infty):=\lim_{t\to +\infty}r(t)$. Notice that
$x(\infty)=\vectorzeros[n]$ and so
$r(\infty)=\vectorones[n]-s(\infty)$. Since $dV_i/dt=0$ for any $i\in
\{1,\dots,n\}$, we have
\begin{equation}\label{eq-net-SIR-iteration-asym-s}
  s_i(\infty) = s_i(0)\ex^{ -\frac{\beta}{\gamma}\sum_{j=1}^n a_{ij}\big( 1-r_j(0) \big) } \ex^{ \frac{\beta}{\gamma}\sum_{j=1}^n a_{ij}s_j(\infty) }.
\end{equation}
Given any initial condition $\big( s(0),r(0) \big)$, the right-hand
side of equation~\eqref{eq-net-SIR-iteration-asym-s} defines a map
\begin{equation}\label{eq:SIR-mapH}
H(s):=e^{\frac{\beta}{\gamma}\diag\!\big( A(s-\vectorones[n]+r(0)) \big)}s(0),
\end{equation}
and $s(\infty)$ is a fixed point of $H$, that
is, $s(\infty)=H\big(s(\infty)\big)$.

\begin{theorem}[Existence, uniqueness, and algorithm for the asymptotic point]
  \label{thm:netSIR-algorithm}
  Consider the network SIR model~\eqref{def:network-SIR-vector}, with
  positive rates $\beta$ and $\gamma$ and with initial condition
  $\big(s(0),x(0),r(0)\big)$ satisfying $s(0)\ge \vect{0}_n$,
  $x(0)>\vect{0}_n$, $r(0) \ge \vect{0}_n$ and
  $s(0)+x(0)+r(0)=\vect{1}_n$. Let
  $\big(s(\infty),\vect{0}_n,r(\infty) \big)$ be the asymptotic state
  of system~\eqref{def:network-SIR-vector}. The map
  $\map{H}{\real^n}{\real^n}$ has the following properties:
\begin{enumerate}
\item there exists a unique fixed point $s^*$ of the map $H$ in the
  set $\setdef{s\in \real^n}{\vect{0}_n \le s \le
    \vect{1}_n-r(0)}$. Moreover, $s^*=s(\infty)$ and
  $r(\infty)=\vect{1}_n - s^*$; and
\item any sequence $\{y(k)\}_{k\in \naturals}$ defined by
  $y(k+1)=H(y(k))$ and initial condition $\vect{0}_n \le y(0) \le
  \vect{1}_n-r(0)$ converges to the unique fixed point $s^*$.
\end{enumerate}
\end{theorem}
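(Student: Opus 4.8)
The plan is to treat the two claims together by reducing everything to a single uniqueness statement. First I would record two elementary structural facts about $H$. Since $A\ge\vectorzeros[n\times n]$ and $s(0)\ge\vectorzeros[n]$, the map $H$ is \emph{monotone}: $s\le s'$ implies $H(s)\le H(s')$. Moreover $H$ maps the box $\mathcal S:=\setdef{s\in\real^n}{\vectorzeros[n]\le s\le\vectorones[n]-r(0)}$ into itself, because at the top corner $s=\vectorones[n]-r(0)=s(0)+x(0)$ the exponent $A(s-\vectorones[n]+r(0))$ vanishes, giving $H(\vectorones[n]-r(0))=s(0)\le\vectorones[n]-r(0)$, while at $s=\vectorzeros[n]$ the exponent is entrywise nonpositive, so $\vectorzeros[n]\le H(\vectorzeros[n])\le s(0)$; monotonicity then sandwiches $H(\mathcal S)\subseteq\mathcal S$. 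Granting uniqueness of a fixed point $s^*$ in $\mathcal S$, statement (ii) follows: iterates from the top, $y(0)=\vectorones[n]-r(0)$, are nonincreasing and iterates from the bottom, $y(0)=\vectorzeros[n]$, are nondecreasing, so both converge (bounded monotone sequences, $H$ continuous) to fixed points, which must equal $s^*$; a general $y(0)\in\mathcal S$ is handled by the squeeze $H^k(\vectorzeros[n])\le H^k(y(0))\le H^k(\vectorones[n]-r(0))\to s^*$. Existence is already available, since the conserved quantities $V_i$ show $s(\infty)$ is a fixed point and $\vectorzeros[n]\le s(\infty)\le s(0)\le\vectorones[n]-r(0)$ places it in $\mathcal S$; once uniqueness is proved, $s^*=s(\infty)$ and $r(\infty)=\vectorones[n]-s^*$.

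The crux is therefore uniqueness, and the key idea I would use is a change of variables that \emph{concavifies} the fixed-point problem. Writing a fixed point as $s_i=s_i(0)\ex^{-\xi_i}$ with the cumulative force of infection $\xi_i:=\tfrac\beta\gamma\big[A(\vectorones[n]-r(0)-s)\big]_i\ge0$, and using $1-r_j(0)-s_j(0)\ex^{-\xi_j}=x_j(0)+s_j(0)(1-\ex^{-\xi_j})$, the equation $s=H(s)$ becomes equivalent to $\xi=G(\xi)$ where
\[ G_i(\xi):=\tfrac\beta\gamma\sum_{j=1}^n a_{ij}\Big(x_j(0)+s_j(0)\big(1-\ex^{-\xi_j}\big)\Big). \]
The map $G$ is monotone increasing, satisfies $G(\vectorzeros[n])=\tfrac\beta\gamma Ax(0)\ge\vectorzeros[n]$, and is \emph{concave}, since each scalar map $\xi_j\mapsto 1-\ex^{-\xi_j}$ is increasing and strictly concave and the coefficients $a_{ij}$ are nonnegative; it maps the box $[\vectorzeros[n],\tfrac\beta\gamma A(\vectorones[n]-r(0))]$ into itself. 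A short argument using irreducibility of $A$ and $x(0)>\vectorzeros[n]$ shows every fixed point is entrywise strictly positive on the support of $s(0)$.

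I would then invoke the standard uniqueness principle for monotone concave maps. Concavity with $G(\vectorzeros[n])\ge\vectorzeros[n]$ yields the subhomogeneity $G(\lambda\xi)\ge\lambda G(\xi)$ for $\lambda\in(0,1)$, strict wherever the relevant in-neighbors carry $s_j(0)>0$ and $\xi_j>0$. Given two strictly positive fixed points $\xi,\eta$, set $\lambda:=\min_i\xi_i/\eta_i$ (so $\xi\ge\lambda\eta$ with equality in some coordinate $i_0$), and assume $\lambda\le1$ after possibly swapping. Then the chain $\xi_{i_0}=G_{i_0}(\xi)\ge G_{i_0}(\lambda\eta)\ge\lambda G_{i_0}(\eta)=\lambda\eta_{i_0}=\xi_{i_0}$ must be all equalities; but strong connectivity forces $i_0$ to have an in-neighbor $j$ with $a_{i_0 j}>0$, $s_j(0)>0$, $\xi_j>0$, and for such $j$ strict concavity makes the middle inequality strict unless $\lambda=1$. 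Hence $\lambda=1$, i.e. $\xi\ge\eta$; the symmetric argument gives $\eta\ge\xi$, so $\xi=\eta$. Translating back, $s^*=\diag(s(0))\ex^{-\xi^*}$ is the unique fixed point of $H$ in $\mathcal S$, which completes both parts.

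The main obstacle is exactly this uniqueness step: the map $H$ itself is \emph{convex} and monotone, a class that can admit several fixed points, so a direct monotone-iteration or contraction argument on $H$ is not enough. The reformulation in the force-of-infection coordinate $\xi$ is what converts the problem into a concave fixed-point problem, for which monotonicity together with strong connectivity delivers uniqueness. The only remaining care is bookkeeping for degenerate nodes with $s_j(0)=0$ (fully recovered at time $0$), whose components are pinned to $s_j^*=0$ and are simply excluded from the strict-positivity and strict-concavity arguments; this does not affect the conclusion.
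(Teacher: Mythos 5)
Your argument is correct on the main case and takes a genuinely different route from the paper's. The paper works directly with $H$: it runs the two monotone iterations from $\vect{0}_n$ and from $\vect{1}_n-r(0)$ to get extremal fixed points $p^*\le q^*$, proves $q^*\ll\vect{1}_n-r(0)$ by propagating strict inequalities along in-neighbors from the set $\{i: x_i(0)>0\}$, and then rules out $p^*<q^*$ by writing $q^*$ as a convex combination of $p^*$ and a point $w$ on the boundary of the box and invoking strict \emph{convexity} of $H_i$. You instead pass to the force-of-infection variable $\xi=\tfrac{\beta}{\gamma}A(\vect{1}_n-r(0)-s)$, which turns the fixed-point equation into $\xi=G(\xi)$ with $G$ monotone and \emph{concave}, and then apply the standard $\lambda=\min_i\xi_i/\eta_i$ subhomogeneity argument for uniqueness. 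Your route is arguably cleaner: it avoids the separate $q^*\ll\vect{1}_n-r(0)$ lemma and the boundary-point construction, and it places the result inside a well-known framework (monotone concave maps on a strongly connected graph). The shared skeleton (invariance of the box, monotone iterations from the two corners, squeeze for arbitrary $y(0)$, identification of $s(\infty)$ via the conserved quantities $V_i$) is the same in both.

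One caution: the degenerate case you wave off in the last sentence is not mere bookkeeping. If some node $j$ has $s_j(0)=x_j(0)=0$ (i.e.\ $r_j(0)=1$), then $j$ drops out of $G$ entirely, the effective graph in the $\xi$-system is the subgraph induced on $\{j: s_j(0)>0\}$, and that subgraph need not be strongly connected -- so both your strict-positivity lemma and the strictness in the subhomogeneity step can fail. This is a real failure mode: on the $4$-cycle $1\to2\to4\to3\to1$ with the extra edge $2\to1$, take $s(0)=(1,1,0,0)$, $x(0)=(0,0,0,1)$, $r(0)=(0,0,1,0)$ and $\tfrac{\beta}{\gamma}a_{12}=\tfrac{\beta}{\gamma}a_{21}=2$; then $H$ has \emph{two} fixed points in the box, $(1,1,0,0)$ and $(\sigma^*,\sigma^*,0,0)$ with $\sigma^*=\ex^{-2(1-\sigma^*)}\approx 0.2$, because the infected node $4$ can reach nodes $1,2$ only through the fully recovered node $3$. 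You should not feel singled out -- the paper's own proof breaks at exactly the same point (its recursion for $\mathcal{I}(k)$ asserts $q_i(k+1)<s_i(0)$, which fails when $s_i(0)=0$, so the claim $q^*\ll\vect{1}_n-r(0)$ is false in this example) -- but your assertion that the degenerate nodes ``do not affect the conclusion'' is unjustified. Your proof is complete under the additional (mild) hypothesis $r(0)\ll\vect{1}_n$, and you should state that hypothesis rather than dismiss the exception.
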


\begin{proof}  
Since $A$ is a non-negative matrix, and $s(0)\le \vect{1}-r(0)$, one
can easily observe that, if $\vect{0}_n \le p\le q \le
\vect{1}_n-r(0)$, then $\vect{0}_n \le H(\vect{0}_n) \le H(p) \le H(q)
\le H(\vect{1}_n-r(0)) \le \vect{1}_n-r(0)$. According to the Brower
Fixed Point Theorem, the map $H$ has at least one fixed point.

Define the sequence $\{p(k)\}_{k\in \naturals}$ by $p(k+1)=H(p(k))$
and $p(0)=\vect{0}_n$. Since
\begin{equation*}
  \vect{1}_n - r(0) \ge p(1) = H(\vect{0}_n)=e^{ \frac{\beta}{\gamma}\diag\!\big( -A\vect{1}_n + Ar(0) \big) }s(0) \ge p(0),
\end{equation*} 
we have $\vect{1}_n - r(0) \ge p(2)=H(p(1))\ge H(p(0))=p(1)$ and, by
induction, $\vect{1}_n - r(0) \ge p(k+1)\ge p(k)$ for any $k\in
\naturals$. Since $p(k)$ is non-decreasing and upper bounded by
$\vect{1}_n-r(0)$, we conclude that the limit $p^* = \lim_{k\to
  \infty} p(k)$ exists, and $p^*$ is a fixed point of the map $H$.

Similarly, define a sequence $\{q(k)\}_{k\in \naturals}$ by
$q(k+1)=H(q(k))$ and $q(0)=\vect{1}_n-r(0)$. One can check that $q(k)$
is non-increasing and that $q^*=\lim_{k\to \infty} q(k)$ is a fixed
point of map $H$. Moreover, since $p(0)\le q(0)$, we have $p(k)\le
q(k)$ for any $k\in \naturals$ and thereby $p^*\le q^*$.

If $p^*=q^*$, then, for any $\vect{0}_n \le y(0) \le \vect{1}_n-r(0)$,
the sequence $\{y(k)\}_{k\in \naturals}$ defined by $y(k+1)=H(y(k))$
satisfies $p(k)\le y(k) \le q(k)$ for any $k\in \naturals$. Therefore,
$y^*=\lim_{k\to \infty} y(k)$ exists and $y^*=p^*=q^*$, which implies
that the fixed point of map $H$ is unique. According to
equation~\eqref{eq-net-SIR-iteration-asym-s}, $s(\infty)$ is the
unique fixed point. This concludes the proof for statement (i) and
(ii).

Now we eliminate the case $p^*<q^*$ by contradiction. First of all we
prove that $q^* \ll \vect{1}_n-r(0)$. Let $N_i = \{j\,|\, a_{ij}>0\}$
and $\mathcal{I}(k)=\big{\{} i \,\big|\, q_i(\tau)<1-r_i(0) \text{ for
  any }\tau \ge k \big{\}}$. We have $\mathcal{I}(0)=\phi$, and
$\mathcal{I}(1)=\big{\{} i\,\big|\, s_i(0)<1-r_i(0) \big{\}}$, since
$q(k+1)=H(q(k))\le s(0)$ for any $k\in \naturals$. Moreover, since,
for any $i$ such that $N_i\cap \mathcal{I}(k)\neq \phi$,
\begin{equation*}
  q_i(k+1) = H(q(k))_i = e^{ \frac{\beta}{\gamma} \sum_{j=1}^n a_{ij}\big( q_j(k)-1+r_j(0) \big)}s_i(0) < s_i(0) ,
\end{equation*}
we have $\mathcal{I}(k+1)=\big{\{} i \,\big|\, N_i\cap
\mathcal{I}(k)\neq \phi \big{\}} \cup \mathcal{I}(k)$. Because the
graph associated with $A$ is strongly connected, we can argue that
$\mathcal{I}(k)$ contains all the indices when $k$ is large
enough. Therefore, $q^* \ll \vect{1}_n - r(0)$.

Now suppose $p^*<q^*$. Let 
\begin{equation*}
  \alpha = \min_j \frac{1-r_j(0)-p_j^*}{q_j^*-p_j^*},\quad \text{and}\quad w=(1-\alpha)p^* + \alpha q^*. 
\end{equation*}
We have $\alpha>1$, $\vect{0}_n \le w < \vect{1}_n-r(0)$, and
$w_i=1-r_i(0)$ for any $i$ such that $\alpha_i = \big( 1-r_i(0)-p_i^*
\big)/( q_j^*-p_j^* )$. Let $\mu=1/\alpha$. Thereby $q^* = \mu w +
(1-\mu) p^*$, where $0<\mu<1$. This means that $q^*$ is a convex
combination of $p^*$ and $w$. Since $H(s)_i$ is a strictly convex
function of $s$, we obtain that
\begin{align*}
  q_i^*  = H\big( \mu w + (1-\mu) p^* \big)_i < \mu H(w)_i + (1-\mu) p_i^* \le \mu \big( 1-r_i(0) \big) + (1-\mu) p_i^* = q_i^*.
\end{align*}
In the last inequality, we used the fact that $H(w)_i\le 1-r_i(0)$ for
any $0\le w \le \vect{1}_n - r(0)$. The previous inequality yields a
contradiction.
\end{proof}

In the rest of this section, we present some numerical results for the
network SIR model on the undirected unweighted graph illustrated in
Figure~\ref{fig:arbitraryGraph}. The adjacency matrix $A$ is binary.
Unless otherwise stated, the system parameters are $\beta=0.5$ and
$\gamma=0.4$.  As initial condition, we select one node fully infected
(the dark-gray node in Figure~\ref{fig:arbitraryGraph}, say, with
index $1$), 19 fully healthy individuals, and zero recovered fraction
--- corresponding to $x(0)=\vect{e}_1$, $r(0)=\vect{0}_n$, and
$s(0)=\vect{1}_n-x(0)$.  These parameters lead to an initial effective
reproduction number $R(0)=3.57$.

\begin{figure}[h]
  \centering
  \includegraphics[width=.5\linewidth]{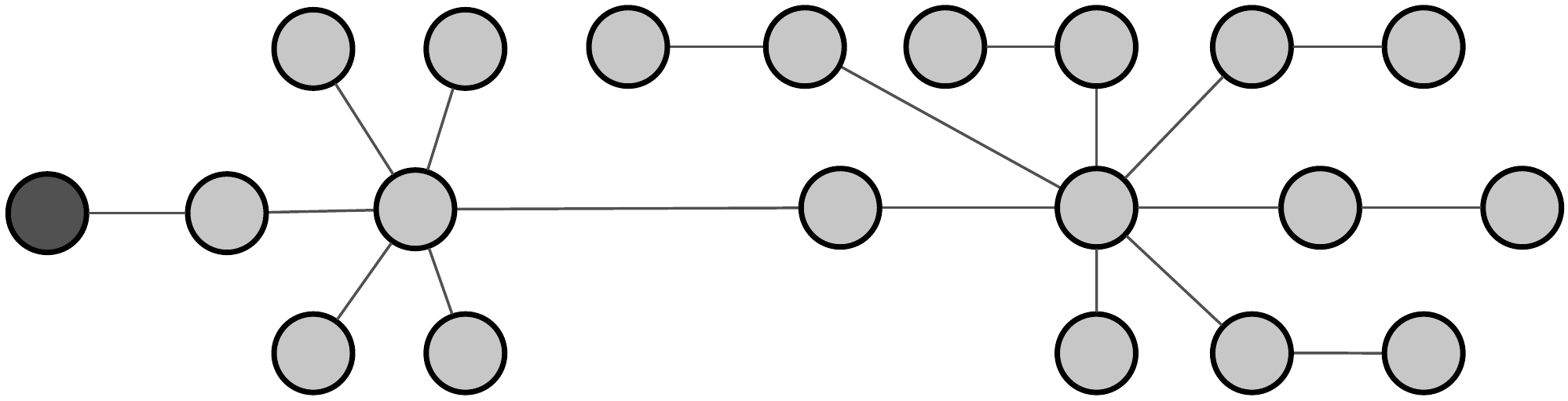}
  \caption{Sample undirected unweighted graph with $20$ nodes}
  \label{fig:arbitraryGraph}
\end{figure}

Figure~\ref{fig:Lambda-maxt} illustrates the time evolution of
$(\beta/\gamma)\lmax(t)$ with varying network parameters. Note that
each evolution starts above the threshold, reaches the threshold value
$1$ in finite time, and converges to a final value below $1$.
\begin{figure}[h]
  \centering
  \includegraphics[width=.45\linewidth]{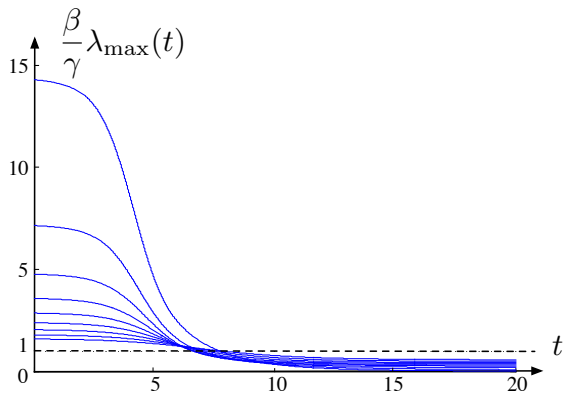}
  \caption{Evolution of the spectral radius of
    $(\beta/\gamma)\diag(s(t))A)$ over the undirected graph in
    Figure~\ref{fig:arbitraryGraph}. The parameter $\gamma$ takes
    value in $.1,.2,\dots,.9$, corresponding respectively to the curves from 
    up to down in the time interval $[0,5]$.}
  \label{fig:Lambda-maxt}
\end{figure}

Figure~\ref{fig:evol-sir-network-arbitrary} illustrates the behavior
of the average susceptible, average infected and average recovered
quantities in populations starting from a small initial infection
fraction and with an effective reproduction number above $1$ at time
$0$. Note that the evolution of the infected fraction of the
population displays a unimodal dependence on time, like in the scalar
model.

\begin{figure}[htb]
  \centering
  \includegraphics[width=.45\linewidth]{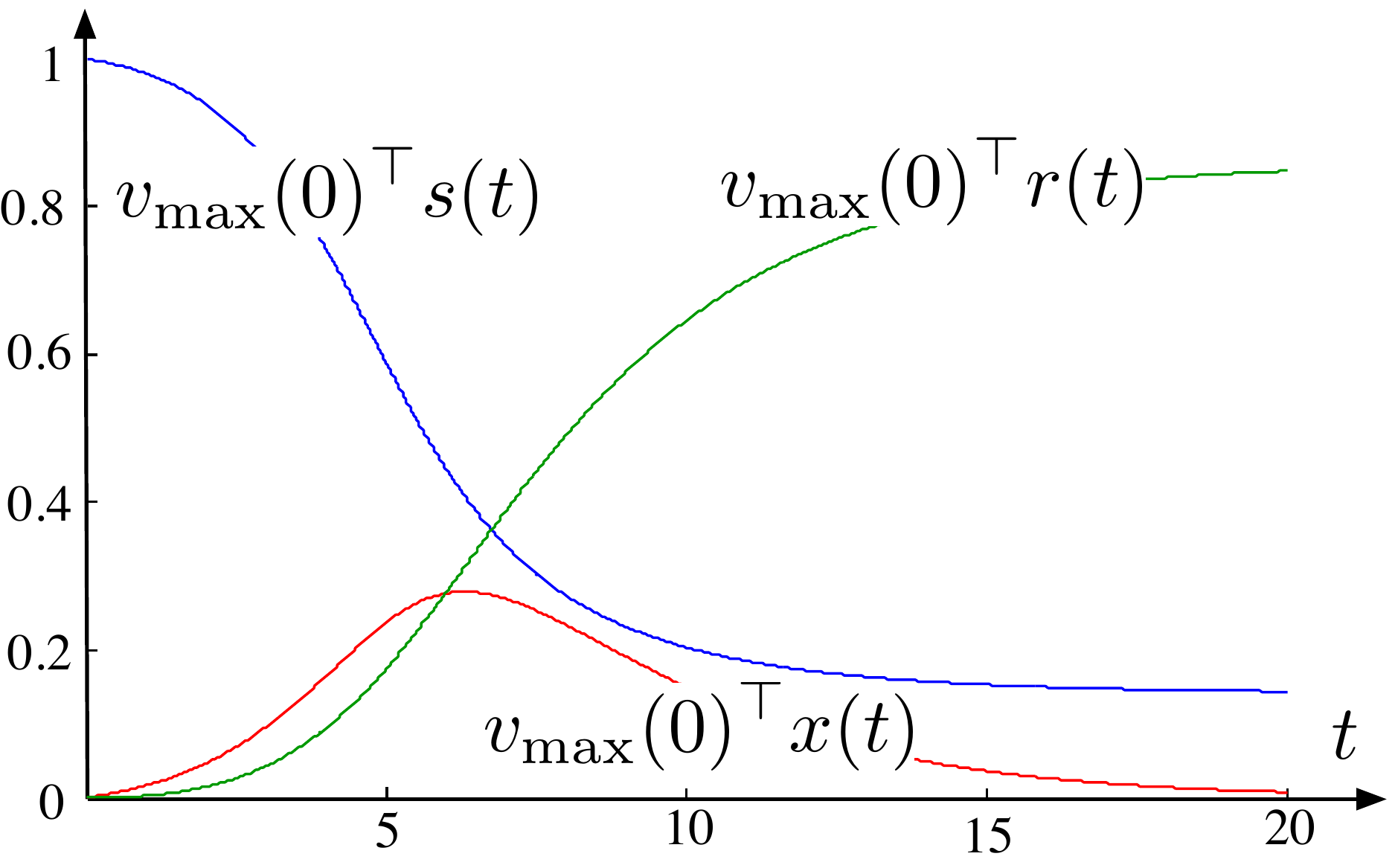}
  \caption{Evolution of the network SIR model from initial condition
    consisting of one node fully infected individual (the dark-gray
    node in Figure~\ref{fig:arbitraryGraph}), 19 fully healthy
    individuals, and zero recovered fraction. The effective
    reproduction number satisfies $R(0)=3.57$.}
  \label{fig:evol-sir-network-arbitrary}
\end{figure}

\section{Conclusion}
This paper provides a comprehensive and consistent treatment of
deterministic nonlinear continuous-time SI, SIS, and SIR propagation
models over contact networks. We investigated the asymptotic
behaviors (vanishing infection, steady-state epidemic, and full
contagion). We studied the transient propagation of an epidemic
starting from small initial fractions of infected nodes. We presented
conditions under which a possible epidemic outbreak occurs or the
infection monotonically vanishes for arbitrary fixing topology
graphs. We introduced a network SI model and analyzed its
behavior. Network SIS model sections includes improved properties over
previously proposed works. New transient behavior, threshold
condition, and system properties for the network SIR model were
proposed. In addition, for the network SIR model, we provide a novel 
iterative algorithm to compute the asymptotic state of the system.
In all cases, we show the results for network models are
appropriate generalizations of those for the respective scalar models.

\bibliographystyle{plain}
\bibliography{alias,Main,FB}

\end{document}